\newif \ifproofs
\newtheorem{theorem}{Theorem}
\newtheorem{lemma}{Lemma}
\newtheorem{assumption}{Assumption}
\theoremstyle{definition}
\newtheorem{definition}{Definition}
\newtheorem{remark}{Remark}
\newcommand{\abs}[1]{\left| #1 \right|}
\newcommand{\bigpar}[1]{\left( #1 \right)}
\newcommand{\bigbra}[1]{\left[ #1 \right]}
\newcommand{\bigbrace}[1]{\left\{ #1 \right\}}
\newcommand{\bigfloor}[1]{\left\lfloor #1 \right\rfloor}
\newcommand{\casewise}[1]{\left\{ #1 \right.}
\newcommand{\evaluate}[1]{\left. #1 \right|} 
\newcommand{\norm}[1]{\left| \left| #1 \right| \right|}
\newcommand{\iid}{\overset{\text{i.i.d}}{\sim}}
\newcommand{\cD}{\mathcal{D}}
\newcommand{\cL}{\mathcal{L}}
\newcommand{\cP}{\mathcal{P}}
\newcommand{\cQ}{\mathcal{Q}}
\newcommand{\cS}{\mathcal{S}}
\newcommand{\cT}{\mathcal{T}}
\newcommand{\cX}{\mathcal{X}}
\newcommand{\cZ}{\mathcal{Z}}
\title{Differentially Private Stochastic Convex Optimization for Network Routing Applications}
\author{
\begin{tabular}{cc}
    \begin{tabular}{c}
        Matthew Tsao \\
        Lyft Inc. \\
        \texttt{mtsao@lyft.com} 
    \end{tabular}
     & 
    \begin{tabular}{c}
        Karthik Gopalakrishnan \\
        Stanford University \\
        \texttt{gkarthik@stanford.edu} 
    \end{tabular}
    \\
    & \\
    \begin{tabular}{c}
        Kaidi Yang \\
        National University of Singapore \\
        \texttt{kaidi.yang@nus.edu.sg} 
    \end{tabular}
     & 
    \begin{tabular}{c}
        Marco Pavone \\
        Stanford University \\
        \texttt{pavone@stanford.edu} 
    \end{tabular}    
\end{tabular}
}
\begin{document}
\maketitle

\begin{abstract}

Network routing problems are common across many engineering applications. Computing optimal routing policies requires knowledge about network demand, i.e., the origin and destination (OD) of all requests in the network. However, privacy considerations make it challenging to share individual OD data that would be required for computing optimal policies. Privacy can be particularly challenging in standard network routing problems because sources and sinks can be easily identified from flow conservation constraints, making feasibility and privacy mutually exclusive. 

In this paper, we present a differentially private algorithm for network routing problems. The main ingredient is a reformulation of network routing which moves all user data-dependent parameters out of the constraint set and into the objective function. We then present an algorithm for solving this formulation based on a differentially private variant of stochastic gradient descent. In this algorithm, differential privacy is achieved by injecting noise, and one may wonder if this noise injection compromises solution quality. We prove that our algorithm is both differentially private and asymptotically optimal as the size of the training set goes to infinity. We corroborate the theoretical results with numerical experiments on a road traffic network which show that our algorithm provides differentially private and near-optimal solutions in practice.
\end{abstract}

    
\section{Introduction}

Network routing problems appear in many important topics in engineering, including traffic routing in transportation systems, power routing in electrical grids, and packet routing in distributed computer systems. Network routing problems study settings where resources must be delivered to customers through a network with limited bandwidth. The goal is typically to route resources to their respective customers as efficiently as possible, or equivalently, with as little network congestion as possible.

One key challenge in network routing problems is that the requests (i.e., network demand) are often not known in advance. Indeed, it is difficult to know exactly how much power a neighborhood will need, or exactly how many visits a particular website will receive on any given day. Since information about the demand is often necessary to develop optimal or near-optimal network routing solutions, network routing algorithms often need a way of obtaining or estimating future demand.
With the advent of big data and internet-of-things systems, crowd-sourcing has gained popularity as a demand forecasting approach for network routing systems. In crowd-sourcing, customers submit their request history to the network operator. The network operator uses this historical data to train a statistical or machine learning model to predict future demand from historical demand. 

While crowd-sourcing provides a bountiful supply of data for training demand forecasting models, it can also introduce potential privacy concerns. Since crowd-sourcing uses individual-level customer data to train demand forecasting models, the model's outputs may reveal sensitive user information, especially if it overfits to its training data \cite{CarliniEtAl21}. Such privacy risks are problematic because they may deter users from sharing their data with network operators, hence reducing the supply of training data for demand forecasting models. 

To address such concerns, the demand forecasting pipeline should be augmented with privacy-preserving mechanisms. Differential privacy \cite{DworkMNS06} is a principled and popular method to occlude the influence a single user's data on the result of a population study while also maintaining the study's accuracy. This is done by carefully injecting noise into the desired computation so that data sets that differ by at most one data point will produce statistically indistinguishable results. 

Providing differential privacy guarantees for the standard formulation of network routing is difficult because the constraints contain user data, meaning that in general feasibility and privacy become mutually exclusive. More specifically, in the standard network routing problem, the demand sources and sinks can be identified by checking for conservation of flow, and as a result, the presence of a user going to or from a very rare location can be detected from any feasible solution. Because differential privacy requires that the presence of any single user's data be difficult to detect from the algorithm's output, privacy and feasibility are at odds with one another in the standard formulation. 

\subsection{Statement of Contributions}

In this paper we present a differentially private algorithm for network routing problems. The main ingredient is a reformulation of network routing which moves all user data dependent parameters out of the constraint set and into the objective function. We then present an algorithm for solving this formulation based on differentially private variants of stochastic gradient descent. In this algorithm, differential privacy is achieved by injecting noise, and one may wonder if this noise injection compromises solution quality. We prove that our algorithm is differentially private and under several reasonable regularity conditions, is also asymptotically optimal (as the size of the training set goes to infinity). We note that in exchange for becoming compatible with differentially private algorithms, this new formulation is more computationally expensive.

\subsection{Related Work}

Traffic assignment in transportation systems is one of the most well-known applications of network routing. Herein we focus our literature review on privacy research in transportation networks. Privacy works in transportation mainly focus on \emph{location privacy}, where the objectives is to prevent untrusted and/or external entities from learning geographic locations or location sequences of an individual \cite{BeresfordStajano2003}. Privacy-preserving approaches have been proposed for various location-based applications, e.g.,  trajectory publishing, mobile crowdsensing, traffic control, etc. These techniques are based on spatial cloaking~\cite{ChowMokbelEtAl2011} and differential privacy~\cite{Dwork2008}. While not the setting of interest for this paper, there are many works that use Secure Multi-Party Computation (MPC) \cite{GoldreichMW87} to achieve privacy in \textit{distributed} mobility systems.

Spatial cloaking approaches aggregate users' exact locations into coarse information. These approaches are often based on $k$-anonymity \cite{Sweeney2002}, where a mobility dataset is divided into equivalence classes based on data attributes (e.g., geological regions, time, etc.) so that each class contains at least $k$ records \cite{GhasemzadehFungEtAl2014,HeChow2020}. These $k$-anonymity-based approaches can guarantee that every record in the dataset is indistinguishable from at least $k-1$ other records. However, $k$-anonymity is generally considered to be a weak privacy guarantee, especially when $k$ is small. Furthermore, very coarse data aggregation is required to address outliers or sparse data, and in these cases spatial cloaking-based approaches provide low data accuracy. 

Differential privacy provides a principled privacy guarantee by producing randomized responses to queries, whereby two datasets that differ in only one entry produce statistically indistinguishable responses \cite{DworkMNS06}. In other words, differential privacy ensures that an adversary with arbitrary background information (e.g., query responses, other entries) cannot infer individual entries with high confidence. Within transportation research, \cite{WangHuEtAl2018,YanLuEtAl2019} share noisy versions of location data for mobile crowd-sensing applications. \cite{GongZhangEtAl2015,GursoyLiuEtAl2018,HussaeniFungEtAl2018,LiYangEtAl2020} use differential privacy to publish noisy versions of trajectory data. \cite{DongEtAl15} and \cite{HanEtAl17} apply differential privacy to gradient descent algorithms for federated learning in mobility systems.

Many of the works mentioned in the previous paragraph establish differential privacy of their proposed algorithms by using composition properties of differential privacy. Composition theorems for differential privacy describe how well privacy is preserved when conducting several computations one after another. In \cite{DongEtAl15} and \cite{HanEtAl17}, composition theorems are applied as black boxes without considering the mathematical properties of the gradient descent algorithm. As a result, the privacy guarantees are overly conservative, meaning that large amounts of noise are added to the algorithm, leading to suboptimal behavior both in theory and in practice. Similarly, \cite{GongZhangEtAl2015,GursoyLiuEtAl2018,HussaeniFungEtAl2018,LiYangEtAl2020} use composition rules as a black box, and while privacy is achieved in this way, there are no accuracy guarantees for the algorithms presented in those works. 

While blackbox applications of differential privacy can lead to impractical levels of noise injection, specialized applications of differential privacy were discovered that could provide privacy with much less noise. \cite{WuEtAl17} show how a simple adjustment to stochastic gradient descent can give rise to an algorithm which is both differentially private, and under reasonable regularity assumptions, is also asymptotically optimal. \cite{FeldmanMTT18} and \cite{FeldmanKT20} refined this idea to develop stochastic gradient descent algorithms that are differentially private, computationally efficient, and have optimal convergence rates. These techniques cannot directly be used to solve the standard formulation of network routing because they study unconstrained optimization problems or optimization problems with public constraint sets (i.e., constraints that do not include any private data). 

\section{Model}\label{sec:model}

In this section we define notations, network models, and assumptions that will allow us to formulate network routing as a data-driven convex optimization problem. 



\subsection{Preliminaries}\label{sec:model:prelims}

\noindent \textbf{Indicator Representation of Edge Sets:} Let $G = (V,E)$ be a graph with vertices $V$ and edges $E = \bigbrace{e_1,...,e_m}$. For any subset of edges $E' \subset E$, we define the indicator representation of $E'$ as $\mathds{1}_{[E']}$ as a boolean vector of length $m$ in the following way: The $i$th entry of $\mathds{1}_{[E']}$ is $1$ if $e_i \in E'$, and is $0$ otherwise. \\

\noindent \textbf{Derivative Notation:} For a scalar valued function $ x \mapsto f(x)$, we use $\nabla f(x_0)$ to denote the gradient of the $f$ with respect to $x$ evaluated at the point $x = x_0$. For a vector valued function $x \mapsto g(x)$, and a variable $z$, we use $\cD_{z}[g](x_0)$ to denote the derivative matrix of $g$ with respect to $z$ evaluated at the point $x = x_0$. \\

\noindent \textbf{Projections:} For a convex set $\cS \subset \mathbb{R}^d$, we use $\Pi_{\cS} : \mathbb{R}^d \rightarrow \mathbb{R}^d$ to denote the projection operator for $\cS$. For any $x \in \mathbb{R}^d$, $\Pi_S(x) := \arg\min_{s \in \cS} \norm{x-s}_2$ to be the point in $S$ that is closest to $x$. 

\subsection{Network, Demand, and Network Flow}\label{sec:model:network}

In this section we will introduce a) a graph model of a network, b) a representation of network demand, c) the standard formulation for network routing and d) privacy requirements. The notation defined in this section is aggregated in table form in Section~\ref{app:notation} for the reader's convenience. 

\begin{definition}[Network Representation]\label{def:graph_representation}
We use a directed graph $G = (V,E)$ to represent the network, where $V$ and $E$ represent the set of vertices and edges in the network, respectively. We will use $n := \abs{V}$ and $m := \abs{E}$ to denote the number of vertices and edges in the graph, respectively. For vertex pairs $(o,d) \in V \times V$ we will use $\cP_G(o,d)$ to denote the set of simple paths from $o$ to $d$ in $G$. 
\end{definition}

\begin{definition}[Operation Period]\label{def:operation-period}
We use $\cT := \bigbra{t_{\text{start}}, t_{\text{end}}}$ to denote the operation period within which a network operator wants to optimize its routing decisions. We will also use $T$ to denote the number of minutes in the operation period. For example, $t_{\text{start}} = 8:00 \text{am}, \; t_{\text{end}} = 9:00 \text{am}$ could represent a morning commute period where $T = 60$. 
\end{definition}

\begin{definition}[Demand Representation]\label{def:demand_matrix}
We study a stationary demand model where demand within the operation period $\cT$ is specified by a matrix $\Lambda \in \mathbb{R}_+^{n \times n}$. For each ordered pair of vertices $(o,d) \in V \times V$, $\Lambda(o,d)$ is the number of requests arriving per minute (i.e., the arrival rate) during $\cT$ that need routing from vertex $o$ to vertex $d$.
\end{definition}

\begin{remark}[Estimating $\Lambda$ from historical data]\label{rem:lambda_from_data}
The arrival rates from historical demand are computed empirically, i.e., if $\Lambda_t$ represents the demand for day $t$, then $\Lambda_t(o,d)$ is calculated by counting the number of $(o,d)$ requests appearing on day $t$, and then dividing it by $T$ to obtain requests per minute.
\end{remark}

\begin{definition}[Link Latency Functions]\label{def:delay_functions}
To model congestion effects, each link $e \in E$ has a latency function $f_e : \mathbb{R}_+ \rightarrow \mathbb{R}_+$ which specifies the average travel time through the link as a function of the total flow on the link.
\end{definition}

In this paper we study a setting where a network operator wants to route demand while minimizing the total travel time for the requests. With these definitions, the standard formulation of minimum travle time network routing is described in Definition~\ref{def:standard_nf}.

\begin{definition}[Standard Formulation of Network Flow]\label{def:standard_nf}
For a network $G = (V,E)$ with link latency functions $\bigbrace{f_e}_{e \in E}$ and demand $\Lambda$, the standard network flow problem is the following minimization program:
\begin{subequations}\label{eqn:standard_nf}
\begin{align}
    \underset{x}{\text{minimize}} & \sum_{e \in E} y_e f_e(y_e) \label{eqn:standard_nf:obj}\\
    \text{subject to } & x = \bigbrace{x^{(o,d)}}_{(o,d) \in V \times V}, \label{eqn:standard_nf:x}\\ 
    y_e &= \sum_{(o,d) \in V \times V} x_e^{(o,d)} \text{ for all } e \in E, \label{eqn:standard_nf:total_flow}\\
    \sum_{v : (u,v) \in E} x_{(u,v)}^{(o,d)} - \sum_{w : (w,u) \in E} x_{(w,u)}^{(o,d)} &= \Lambda(o,d) \mathds{1}_{[u = o]} - \Lambda(o,d) \mathds{1}_{[u = d]} \text{ for all } u \in V, (o,d) \in V \times V. \label{eqn:standard_nf:flow_conserve}
\end{align}
\end{subequations}
\end{definition}
In \eqref{eqn:standard_nf}, the decision variable $x$ is a collection of flows $\bigbrace{x^{(o,d)}}_{(o,d) \in V \times V}$, one for each non-zero entry of $\Lambda$, as represented by constraint \eqref{eqn:standard_nf:x}. \eqref{eqn:standard_nf:flow_conserve} are flow conservation constraints to ensure that $x^{(o,d)}$ sends $\Lambda(o,d)$ units of flow from $o$ to $d$. Constraints \eqref{eqn:standard_nf:total_flow} ensure that $\bigbrace{y_e}_{e \in E}$ represents the total amount of flow on each edge. Finally, the objective function \eqref{eqn:standard_nf:obj} is to minimize the total travel time as a function of total network flow. 

In the next subsection we will describe the rigorous privacy requirements that we will mandate while designing algorithms for network flow. We then describe in Section~\ref{sec:model:feas_vs_priv} why privacy and feasibility are mutually exclusive in the standard network flow formulation.

\subsection{Privacy Requirements}\label{sec:model:privacy_req}

We will use differential privacy to reason about the privacy-preserving properties of our algorithms. At a high level, changing one data point of the input to a differentially private algorithm should lead to a statistically indistinguishable change in the output. To make this concept concrete we will need to define data sets and adjacency.

\begin{definition}[Data sets]
Given a space of data points $\cZ$, a data set $L$ is any finite set of elements from $\cZ$. In practice, each element of a data set is data collected from a single user, or data collected during a specific time period. We will use $\cL$ to denote the set of all data sets. 
\end{definition}

\begin{definition}[Data Set Adjacency]
Given a space of data sets $\cL$, an adjacency relation $\text{Adj}$ is a mapping $\text{Adj}: \cL \times \cL \rightarrow \bigbrace{0 , 1}$. Two data sets $L_1, L_2$ are said to be adjacent if and only if $\text{Adj}(L_1, L_2) = 1$. 
\end{definition}

While the exact definition can vary across applications, adjacent data sets are sets that are very similar to one another. The most common definition of adjacency is the following: $L,L'$ are adjacent if and only if $L'$ can be obtained by adding, deleting, or modifying at most one data point from $L$, and vice versa. Thus comparing a function's output on adjacent data sets measures how sensitive the function is to changes in a single data point. 

With these definitions in place, we are now ready to define differential privacy. 

\begin{definition}[Differential Privacy]\label{def:diffpriv}
For a given adjacency relation $\text{Adj}$, a function $M : \cL \rightarrow \cX$ is $(\epsilon,\delta)$-differentially private if for any $L_1, L_2 \in \cL$ with $\text{Adj}(L_1, L_2) = 1$, the following inequality
\begin{align*}
    \mathbb{P} \bigbra{ M(L_1) \in E } \leq e^\epsilon \mathbb{P} \bigbra{ M(L_2) \in E } + \delta 
\end{align*}
holds for every event $E \subset \cX$. 
\end{definition}

The definition of differential privacy requires that changing one data point in the input to a differentially private algorithm cannot change the distribution of the algorithm's output by too much. Such a requirement ensures the following strong privacy guarantee: It is statistically impossible to reliably infer the properties of a single data point just by examining the algorithm's output, even if the observer knows all of the other data points in the input \cite{DworkMNS06}. 

To proceed, we must specify what data set adjacency means in the context of network routing. For network routing problems, historical data is often a collection of routing requests through the network. To protect the privacy of those who submit requests, we want the routing policy we compute to not depend too much on any single request that appears in the historical data. This motivates the the following notion of data set adjacency that we will be using throughout this paper. 

\begin{definition}[Request Level Adjacency (RLA)]\label{def:trip_level_adjacency}
We define a function $\text{RLA} : \cL \times \cL \rightarrow \bigbrace{0,1}$ which maps pairs of data sets to booleans. For two historical datasets of network demand $L := (\Lambda_1,...,\Lambda_N)$ and $L' := (\Lambda'_1,...,\Lambda'_N)$, we say $L$ and $L'$ are request-level-adjacent (RLA) if exactly one of the following statements is true:
\begin{enumerate}
    \item $L$ contains all of the requests in $L'$, and contains one extra request that is not present in $L'$.
    \item $L'$ contains all of the requests in $L$, and contains exactly one extra request that is not present in $L$.
\end{enumerate}
Mathematically, $L,L'$ are request-level-adjacent, i.e., $\text{RLA}(L, L') = 1$, if and only if they satisfy all of the following relations:
\begin{itemize}
    \item There exists $t$ so that $\Lambda_k = \Lambda'_k$ for all $k \neq t$.
    \item There exists two vertices $o$ and $d$ so that $\Lambda_t(o',d') = \Lambda'_t(o',d')$ for all $(o',d') \neq (o,d)$. 
    \item $\abs{\Lambda_t(o,d) - \Lambda'_t(o,d)} \leq \frac{1}{T}$.
\end{itemize}
Indeed, these relations dictate that one of the datasets contains an extra request from $o$ to $d$ which happened on the $t$th day. A difference of $1$ request within a $T$ minute operation period leads to a change of $\frac{1}{T}$ in the arrival rate. Aside from this difference, the datasets are otherwise identical. 
\end{definition}

\subsection{Differential Privacy Challenges in Standard Network Flow}\label{sec:model:feas_vs_priv}

In the introduction we mentioned that privacy and feasibility can be mutually exclusive in the standard formulation of network flow described in \eqref{eqn:standard_nf}. In this section we formally show that if $\Lambda$ is constructed from a data set of trips as described in Remark~\ref{rem:lambda_from_data}, then trips to or from uncommon locations can be easily detected from any feasible solution to \eqref{eqn:standard_nf}. As a result, announcing or releasing a feasible solution to \eqref{eqn:standard_nf} is not, in general, differentially private. Formally, we will prove the following theorem in this section.

\begin{theorem}[Differential Privacy Impossibility for Standard Network Flow]\label{thm:standard_privacy_impossible}
Let $M$ be a function that takes as input a matrix $\Lambda$ with non-negative entries and returns a feasible solution to the optimization problem $\eqref{eqn:standard_nf}$ where $\Lambda$ is used as a demand matrix. Then $M$ cannot be $(\epsilon,\delta)$-differentially private for any $\delta < 1$.
\end{theorem}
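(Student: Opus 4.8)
The plan is to exhibit a single pair of adjacent demand matrices on which any oracle $M$ that returns a \emph{feasible} solution of \eqref{eqn:standard_nf} must behave in a maximally distinguishable way, forcing the $(\epsilon,\delta)$ inequality of Definition~\ref{def:diffpriv} to hold only with $\delta \ge 1$. Fix any directed edge $(a,b)\in E$ with $a\neq b$ (if $E$ contains no such edge then \eqref{eqn:standard_nf} is infeasible whenever the demand has a nonzero off-diagonal entry, so the statement is vacuous). Let $\Lambda\equiv 0$ be the all-zero demand matrix --- the empirical demand of a dataset containing no requests --- and let $\Lambda'$ agree with $\Lambda$ except that $\Lambda'(a,b)=c$, where $c>0$ is the arrival-rate contribution of a single $a\to b$ request (so $c=1/T$, or $c=1/(NT)$ after averaging over $N$ days, in line with Remark~\ref{rem:lambda_from_data}). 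These two matrices differ in exactly one entry by the amount produced by one request, so they are adjacent under the matrix-level adjacency induced by request-level adjacency (Definition~\ref{def:trip_level_adjacency}); hence a differentially private $M$ must satisfy $\mathbb{P}\bigbra{M(\Lambda')\in\cE}\le e^{\epsilon}\,\mathbb{P}\bigbra{M(\Lambda)\in\cE}+\delta$ for every event $\cE\subset\cX$.

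Next I would take the detector event $\cE:=\bigbrace{x:\ \sum_{w:(w,b)\in E} y_{(w,b)}\ \ge\ c}$, where $y$ is the aggregate edge flow induced by $x$ through \eqref{eqn:standard_nf:total_flow}. Two facts complete the proof. \emph{Fact (i): every feasible solution for demand $\Lambda'$ lies in $\cE$.} Summing the flow-conservation constraints \eqref{eqn:standard_nf:flow_conserve} over all commodities at the vertex $u=b$ yields $\sum_{v:(b,v)\in E} y_{(b,v)}-\sum_{w:(w,b)\in E} y_{(w,b)}=\sum_{d'}\Lambda'(b,d')-\sum_{o'}\Lambda'(o',b)=-c$, so $\sum_{w} y_{(w,b)}=c+\sum_{v} y_{(b,v)}\ge c$ by non-negativity of the flows. \emph{Fact (ii): no feasible solution for demand $\Lambda\equiv 0$ lies in $\cE$.} When the demand is identically zero the program \eqref{eqn:standard_nf} has no commodities at all, so $y\equiv 0$ and $\sum_w y_{(w,b)}=0<c$. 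Together these give $\mathbb{P}\bigbra{M(\Lambda')\in\cE}=1$ and $\mathbb{P}\bigbra{M(\Lambda)\in\cE}=0$, whence the privacy inequality collapses to $1\le\delta$, contradicting $\delta<1$.

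I expect the one delicate point to be Fact (ii): it relies on the convention --- built into \eqref{eqn:standard_nf} through constraint \eqref{eqn:standard_nf:x} --- that the commodities, and hence the flow variables, are in bijection with the \emph{nonzero} entries of the demand, so that zero demand really does leave no room for spurious circulations through $b$; non-negativity of flows is used here and in Fact (i) as well. If instead one reads \eqref{eqn:standard_nf} as keeping a flow variable for every ordered pair regardless of demand, the same argument goes through after either (a) choosing $b$ to be a vertex with no outgoing edges, so that conservation and non-negativity force all flow incident to $b$ to vanish under zero demand, or (b) assuming $M$ returns the cost-optimal solution under strictly increasing latencies, which contains no circulations. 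Everything else is routine bookkeeping with \eqref{eqn:standard_nf:flow_conserve} and Definition~\ref{def:diffpriv}.
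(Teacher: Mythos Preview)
Your high-level strategy is exactly the paper's: exhibit two request-level-adjacent demand matrices whose feasible sets for \eqref{eqn:standard_nf} are separated by a measurable event, so that any $M$ returning feasible points hits that event with probability $1$ on one input and $0$ on the other, forcing $\delta\ge 1$. The paper even uses essentially your pair (a demand with no trips touching the chosen destination, and the same demand with a single extra request added).

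The difference is in the separating event. You test the \emph{aggregate} in-flow $\sum_{w:(w,b)\in E} y_{(w,b)}$; the paper instead tests the \emph{commodity-specific} net flow $\sum_{v} x^{(o,d)}_{(d,v)}-\sum_{w} x^{(o,d)}_{(w,d)}$ at the sink $d$. This matters because your Fact~(ii) does not go through as written: constraint \eqref{eqn:standard_nf:x} indexes the commodities over \emph{all} of $V\times V$, not only nonzero-demand pairs, so under $\Lambda\equiv 0$ each $x^{(o,d)}$ is merely required to be a (non-negative) circulation, and circulations through $b$ can make $y_{(w,b)}>0$. Since the theorem concerns an arbitrary $M$ returning \emph{any} feasible point, $M(\Lambda)$ may land in your event $\cE$ with positive (even unit) probability, breaking the argument. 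Your fallback (a) adds a structural hypothesis on $G$ not present in the theorem, and fallback (b) changes the hypothesis on $M$.

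The paper's event sidesteps circulations entirely: by \eqref{eqn:standard_nf:flow_conserve} with $u=d$, the commodity-$(o,d)$ net flow at $d$ equals $-\Lambda(o,d)$ for \emph{every} feasible $x$, so it is $0$ on one instance and $-1/T$ on the other, regardless of what the other commodities or circulations do. The one-line fix to your proof is therefore to replace $\cE$ by $\bigbrace{x:\ \sum_{v} x^{(a,b)}_{(b,v)}-\sum_{w} x^{(a,b)}_{(w,b)}=0}$; then Facts~(i) and~(ii) follow immediately from \eqref{eqn:standard_nf:flow_conserve} without any appeal to non-negativity or to the absence of circulations.
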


We further note that $(\epsilon, \delta)$-differential privacy only provides a meaningful privacy guarantee when $\epsilon < 1$ and $\delta < 1$ \cite{Dwork2008}. 

\begin{proof}[Proof of Theorem~\ref{thm:standard_privacy_impossible}]
Let $G = (V,E)$ be a network, and $\Lambda$ be constructed from a historical data set of requests in $G$. Suppose there exist uncommon locations $o$ and $d$ for which $\Lambda$ contains no trips to or from either $o$ or $d$. Mathematically, this means that

$$\Lambda(o, u) = 0, \Lambda(u,o) = 0, \Lambda(d,u) = 0, \Lambda(u,d) = 0 \text{ for all } u \in V. $$

Such situations are not uncommon in transportation networks, if, for example, $o$ and $d$ are the homes of two different people who do not drive (perhaps they walk or bike to and from work). 

If we now add a trip from $o$ to $d$ to the data set, and let $\Lambda'$ be the resulting demand matrix, then we have 
\begin{enumerate}
    \item[i] $\Lambda(o',d') = \Lambda'(o',d')$ for all $(o',d') \neq (o,d)$, and
    \item[ii] $\Lambda(o,d) = 0$, $\Lambda'(o,d) = \frac{1}{T}$. 
\end{enumerate}

Let $\text{Prob}_1, \text{Prob}_2$ be the optimization problem $\eqref{eqn:standard_nf}$ using demand $\Lambda, \Lambda'$ respectively. Because $\Lambda,\Lambda'$ are request-level-adjacent, any differentially private algorithm must behave similarly when acting on $\text{Prob}_1 \text{ and } \text{Prob}_2$. However, this is impossible because the feasible sets of $\text{Prob}_1, \text{Prob}_2$ are disjoint. If we look at constraint $\eqref{eqn:standard_nf:flow_conserve}$ with $u = d$ and $(o,d)$ then any feasible solution to $\text{Prob}_1$ satisfies
\begin{align*}
    \sum_{u : (u,d) \in E} x_{(u,d)}^{(o,d)} - \sum_{w : (d,w) \in E} x_{(u,d)}^{(o,d)} = \Lambda(o,d) \mathds{1}_{[d = d]} = 0.
\end{align*}
However, any feasible solution to $\text{Prob}_2$ satisfies
\begin{align*}
    \sum_{u : (u,d) \in E} x_{(u,d)}^{(o,d)} - \sum_{w : (d,w) \in E} x_{(u,d)}^{(o,d)} = \Lambda(o,d) \mathds{1}_{[d = d]} = \frac{1}{T}.
\end{align*}
In other words, checking the net flow leaving node $d$ will detect the presence of any trips going to or from $d$. We will now show that any algorithm which returns a feasible solution to \eqref{eqn:standard_nf} cannot be differentially private. To this end, define the event $E$ to be the event that flow is conserved at node $d$. Then for any algorithm $M$ that takes as input a demand matrix and returns a feasible solution to \eqref{eqn:standard_nf} with the specified demand matrix, we have $\mathbb{P}[M(\Lambda) \in E] = 1, \mathbb{P}[M(\Lambda') \in E] = 0$. Recalling Definition~\ref{def:diffpriv}, $M$ is $(\epsilon,\delta)$-differentially private only if $\mathbb{P}[M(\Lambda) \in E] \leq e^\epsilon \mathbb{P}[M(\Lambda') \in E] + \delta$. This equation can only be satisfied if $\delta \geq 1$. 
\end{proof}

We have two remarks regarding Theorem~\ref{thm:standard_privacy_impossible}.

\begin{remark}
The same result holds if $M$ returns the total flow $y$ associated with a feasible solution (see \eqref{eqn:standard_nf:total_flow}), rather than returning the feasible solution itself. In other words, even total traffic measurements (without knowing the breakdown by $(o,d)$ pairs) can already expose trips to or from uncommon locations.
\end{remark}

\begin{remark}
The vulnerability of trips to and from uncommon locations is not a purely theoretical concern. A study on the New York City Taxi and Limousine data set was able to identify trips from residential areas to gentlemen's club \cite{Pandurangan14}. Because the start locations were in residential areas, it was easy to re-identify the users who booked the taxi trips as the owners of the homes that the taxi trips began at. As a result, despite the data set being anonymized, users who had taken taxis to this gentlemen's club were re-identified, and their reputations may have been negatively affected.  
\end{remark}


\section{Routing Policy Formulation of Network Flow}\label{sec:netflow_reformulate}

To sidestep the impossibility result described in Theorem~\ref{thm:standard_privacy_impossible}, we present an alternative formulation for the network flow problem in this section. The alternative formulation avoids the issues mentioned in Theorem~\ref{thm:standard_privacy_impossible} by moving all parameters related to user data from the constraints to the objective function, as described in \eqref{eqn:opt:deterministic:omniscient}. We note that \eqref{eqn:opt:deterministic:omniscient} can only be solved if the demand $\Lambda$ is known, which may not always be the case. For this reason, we present two variations of \eqref{eqn:opt:deterministic:omniscient}: \eqref{eqn:opt:random:omniscient} is the stochastic version of \eqref{eqn:opt:deterministic:omniscient} where $\Lambda$ is drawn from a distribution $\cQ$, and \eqref{eqn:opt:random:datadriven} is the data driven approximation to \eqref{eqn:opt:random:omniscient} that one would solve if $\cQ$ is unknown. 

Before formally defining the model, we provide a high level description of how this formulation works. In this formulation, a feasible solution $x = \bigbrace{ x^{(o,d)} }_{(o,d) \in V \times V}$ specifies, for each $(o,d) \in V \times V$, a flow $x^{(o,d)}$ that routes 1 unit of flow from $o$ to $d$. We note that a flow is specified for $(o,d)$ even if there is no demand for this origin and destination in $\Lambda$, i.e., $\Lambda(o,d) = 0$. We refer to $x$ as a \textit{routing policy} due to its connection to randomized routing, which Remark~\ref{rem:total_net_flow} discusses in further detail. Given a feasible solution $x$, the objective function first calculates the total traffic on each edge by taking a linear combination of $\bigbrace{ x^{(o,d)} }_{(o,d) \in V \times V}$ flows, where the coefficients of the linear combination are determined by the demand $\Lambda$, with high demand $(o,d)$ pairs having larger coefficients. The total travel time can be computed from the total traffic in the same way as \eqref{eqn:standard_nf:obj}. These ideas are formalized by the following definitions. 

\begin{definition}[Unit $(o,d)$ flow]\label{def:unit_od_flow}
For a given origin-destination pair $(o,d)$, we say that a flow $x^{(o,d)} \in \mathbb{R}_+^m$ is a unit $(o,d)$ flow if and only if it routes exactly $1$ unit of flow from $o$ to $d$. Formally, this condition is represented by the following constraints:
\begin{align}
\sum_{v \in V : (v,u) \in E} x^{(o,d)}_{(v,u)} - \sum_{v \in V : (u,v) \in E} x^{(o,d)}_{(u,v)} = 
\casewise{
\begin{tabular}{cc}
$-1$ & if $u = o$ \\
$1$ & if $u = d$ \\
$0$ & otherwise.
\end{tabular}
} \text{ for all } u \in V \label{eqn:od_flow_conservation}
\end{align}
Indeed, \eqref{eqn:od_flow_conservation} requires that the net flow entering $o$ is $-1$, the net flow entering $d$ is $1$, and that flow is conserved at all other vertices in the graph. 
\end{definition}

\begin{definition}[Unit Network Flow]\label{def:unit_network_flow}
A unit network flow is a collection of flows $x = \bigbrace{x^{(o,d)}}_{(o,d) \in V \times V}$ so that $x^{(o,d)}$ is a unit $(o,d)$ flow for each $(o,d) \in V \times V$. We use $\cX$ to denote the set of all unit network flows.
\end{definition}

\begin{remark}\label{rem:x_stack}
We can represent $x$ as a concatenation of the vectors $\bigbrace{x^{(o,d)}}_{(o,d) \in V \times V}$. Since each unit $(o,d)$ flow is a $m$ dimensional vector, we have $x \in \mathbb{R}^{n^2 m}$. 
\end{remark}

We will refer to unit network flows as routing policies, due to their connection with randomized routing, described in the following remark.

\begin{remark}[Routing demand using unit flow policies]\label{rem:total_net_flow}
A unit network flow represents a randomized routing policy. Due to the flow decomposition theorem \cite{TrevisanNotes}, every unit $(o,d)$ flow can be written as a distribution of paths from $o$ to $d$. Formally, for any unit $(o,d)$ flow $x^{(o,d)}$, there exist paths $p_1^{(o,d)}, p_2^{(o,d)}, ..., p_m^{(o,d)}$ from $o$ to $d$ and weights $w_1^{(o,d)}, ..., w_m^{(o,d)}$ so that the following equations hold:
\begin{align}
    x^{(o,d)} &= \sum_{i=1}^m w_i^{(o,d)} \mathds{1}_{[p_i^{(o,d)}]} \label{eqn:flow_decomp}\\
    &\sum_{i=1}^m w_i^{(o,d)} = 1 \nonumber \\
    & w_i^{(o,d)} \geq 0 \text{ for all } 1 \leq i \leq m. \nonumber  
\end{align}
Furthermore, $p_1^{(o,d)},...,p_m^{(o,d)}$ and $w_1^{(o,d)},...,w_m^{(o,d)}$ are efficiently computable. Defining the probability distribution $P_{x^{(o,d)}}$
\begin{align*}
    P_{x^{(o,d)}}(\mathds{1}_{[p_i^{(o,d)}]}) = w_i^{(o,d)} \text{ for all } 1 \leq i \leq m,
\end{align*}
$x^{(o,d)}_e$ represents the probability that a random path chosen from $P_{x^{(o,d)}}$ contains $e$.
$x$ describes the expected behavior of the randomized routing policy that determines routes for $(o,d)$ requests by drawing a path independently at random from $P_{x^{(o,d)}}$. In particular, when using this policy to serve demand $\Lambda$, by linearity of expectation, the expected number of requests from $o$ to $d$ whose assigned path contains $e$ is exactly $\Lambda(o,d) x_e^{(o,d)}$. Furthermore, the average number of requests on each edge will be $\sum_{(o,d) \in V \times V} \Lambda(o,d) x^{(o,d)}$. 
\end{remark}

\subsection{Minimum Total Travel Time Network Flow}\label{sec:model:det_omni_opt}

In the minimum travel time network flow problem, the network operator wants to find a stationary routing policy for each $(o,d)$ pair that will lead to small total travel times for the requests. Due to the equivalence between stationary $(o,d)$ routing policies and unit $(o,d)$ flows, the network operator can instead search over the space of unit $(o,d)$ flows. 

The total travel time of a flow $y$ through $G$ is given by $\sum_{e \in E} y_e f_e(y_e)$. The total flow resulting from a unit network flow $x$ serving demand $\Lambda$ is the sum of the flow contributions from each of the $(o,d)$ pairs. With Remark~\ref{rem:total_net_flow} in mind, the total amount of flow on an edge $e \in E$ when serving $\Lambda$ according to $x$ is given by $\sum_{(o,d) \in V \times V} \Lambda(o,d) x_e^{(o,d)}$. We can thus define $F(x,\Lambda)$, the total travel time experienced by requests $\Lambda$ when being routed by $x$, as follows:

\begin{align}
    F(x, \Lambda) := \sum_{e \in E} \bigpar{ \sum_{(o,d) \in V \times V} \Lambda(o,d) x_e^{(o,d)} } f_e \bigpar{ \sum_{(o,d) \in V \times V} \Lambda(o,d) x_e^{(o,d)} }. \label{eqn:network_flow_to_time}
\end{align}

With these definitions in place, the unit network flow that minimizes the total travel time when serving the demand $\Lambda$ can be found by solving the following optimization problem

\begin{align}
	\underset{x}{\text{minimize }} & F(x,\Lambda) \label{eqn:opt:deterministic:omniscient}\\
	\text{subject to } & x = \bigbrace{x^{(o,d)}}_{(o,d) \in V \times V}, \nonumber \\
	& x^{(o,d)} \text{ is a unit $(o,d)$ flow for every} (o,d) \in V \times V, \nonumber 
\end{align}

\subsection{Network Flow with Stochastic Demand}\label{sec:model:demand_model}

In practice, demand may vary from day to day, and such variations can be modeled by $\Lambda$ being a random variable with distribution $\cQ$. If $\cQ$ is known by the network operator, then rather than solving \eqref{eqn:opt:deterministic:omniscient}, the operator is interested in minimizing expected total travel time through the following stochastic optimization problem:

\begin{align}
	\underset{x}{\text{minimize }} & \mathbb{E}_{\Lambda \sim \cQ} \bigbra {F(x,\Lambda)} \label{eqn:opt:random:omniscient}\\
	\text{subject to } & x = \bigbrace{x^{(o,d)}}_{(o,d) \in V \times V}, \nonumber \\
	& x^{(o,d)} \text{ is a unit $(o,d)$ flow for every} (o,d) \in V \times V, \nonumber 
\end{align}

We note that \eqref{eqn:opt:random:omniscient} is a generalization of \eqref{eqn:standard_nf} to the case when $\Lambda$ is random. 

In the more realistic case that $\cQ$ is not known, the optimization problem \eqref{eqn:opt:random:omniscient} can be approximated from historical data. We study a situation where the network operator has demand data $\Lambda_1, \Lambda_2, ..., \Lambda_N \iid \cQ$ collected from operations of previous days. Using this data it can solve the following empirical approximation to \eqref{eqn:opt:random:omniscient}:

\begin{align}
	\underset{x}{\text{minimize }} & \frac{1}{N} \sum_{k=1}^N F(x,\Lambda_k) \label{eqn:opt:random:datadriven}\\
	\text{subject to } & x = \bigbrace{x^{(o,d)}}_{(o,d) \in V \times V}, \nonumber \\
	& x^{(o,d)} \text{ is a unit $(o,d)$ flow for every} (o,d) \in V \times V, \nonumber 
\end{align}

The optimization problem in \eqref{eqn:opt:random:datadriven} uses historical data to estimate \eqref{eqn:opt:random:omniscient}. In line with Assumption~\ref{assump:bounded-demand} we will assume that $\Lambda_t(o,d) \leq \lambda_{\max}$ for all values of $t, o$ and $d$. 

In Section~\ref{sec:algorithm} we show how \eqref{eqn:opt:random:datadriven} can be solved in a request-level differentially private way. 

\subsection{Assumptions on Travel Time functions}\label{sec:model:assumptions}

In this section we will introduce some assumptions that will help us establish our technical results. We will make the following assumptions on the network demand:

\begin{assumption}[Bounded Demand]\label{assump:bounded-demand}
We assume there exists a non-negative constant $\lambda_{\max}$ so that $\Lambda(o,d) \leq \lambda_{\max}$ for every $(o,d) \in V \times V$. In practice, this constant can be estimated from historical data. 
\end{assumption}

\noindent The following are assumptions we make on the objective function $F$ (see \eqref{eqn:network_flow_to_time}). These assumptions are related to properties of the link latency functions $\bigbrace{f_e}_{e \in E}$. We present a typical model for link latency functions in Section~\ref{sec:model:transit-net-ex} that satisfies all of the following assumptions. 

\begin{assumption}[Bounded Variance Gradients]\label{assump:bounded-var-grads} We assume there exists a non-negative constant $K$ so that for every $x$, the variance of $\nabla F(x,\Lambda)$ is upper bounded by $K^2$, i.e., $\mathbb{E}_{\Lambda \sim \cQ} \bigbra{ \norm{\nabla F(x,\Lambda)}_2^2 } \leq K^2$. 
\end{assumption}

\begin{assumption}[Twice Differentiability]\label{assump:twice_diff}
We assume that $F(\cdot, \Lambda)$ is twice-differentiable for every $\Lambda$ so that the hessian $H(x, \Lambda) := \frac{\partial^2}{\partial x^2} F(x,\Lambda)$ is defined on the entire domain of $x$. 
\end{assumption}

\begin{assumption}[Strong Convexity]\label{assump:strong_convex}
We assume that there exists $\alpha > 0$ for which $F(\cdot, \Lambda)$ is $\alpha$-strongly convex for every $\Lambda$. This means that for every $\Lambda$, and any unit network flows $x,x'$ we have
\begin{align*}
    F(x',\Lambda) \geq F(x, \Lambda) + \nabla_x F(x, \Lambda)^\top (x' - x) + \frac{\alpha}{2} \norm{x'-x}_2^2
\end{align*}
\end{assumption}

\begin{assumption}[Smoothness]\label{assump:smoothness}
We assume that there exists $\beta > \alpha$ for which $F(\cdot, \Lambda)$ is $\beta$-smooth for every $\Lambda$. This means that for every $\Lambda$, and any unit network flows $x,x'$ we have
\begin{align*}
    F(x',\Lambda) \leq F(x, \Lambda) + \nabla_x F(x, \Lambda)^\top (x' - x) + \frac{\beta}{2} \norm{x'-x}_2^2.
\end{align*}
\end{assumption}

\begin{assumption}[Bounded second order partial derivative]\label{assump:bounded_double_derivative}
We assume that there exists $C > 0$ so that $\norm{\cD_{\Lambda(o,d)} \bigbra{ \nabla_x F}(x,\Lambda)}_{op} \leq C$ for all $x,\Lambda$ and $(o,d) \in V \times V$.  
\end{assumption}

\begin{remark}[Satisfying Assumption \ref{assump:bounded_double_derivative}]
We can satisfy Assumption~\ref{assump:bounded_double_derivative} for any positive $C$ by re-scaling. In particular, letting $\theta := \max_{x', \Lambda'} \norm{\cD_{\Lambda(o,d)} \bigbra{ \nabla_x F}(x',\Lambda')}_2$, then for any $C$ we can define a re-scaled objective function 
\begin{align*}
    \widetilde{F}(x, \Lambda) := \frac{C}{\theta} F(x,\Lambda).
\end{align*}
By construction, $\widetilde{F}$ satisfies Assumption~\ref{assump:bounded_double_derivative} with constant value $C$. Note, however, that the smoothness and strong convexity parameters for $\widetilde{F}$ will be rescaled accordingly to $\frac{\beta}{\theta}$ and $\frac{\alpha}{\theta}$ respectively. 
\end{remark}

\subsection{Transportation Model satisfying assumptions from Section~\ref{sec:model:assumptions}}\label{sec:model:transit-net-ex}


In this section, we present a transportation network model that satisfies Assumptions~\ref{assump:twice_diff},\ref{assump:strong_convex},\ref{assump:smoothness} and \ref{assump:bounded_double_derivative}. We study a network where the link latency functions are all affine where for each $e \in E$, there are non-negative constants $q_e, c_e$ so that $f_e(y) = q_e y + c_e$. Let $Q \in \mathbb{R}^{m \times m}$ be defined so that $Q_{ee} = q_e$, and let $c \in \mathbb{R}^m$ be the concatenation of all of the zero order coefficients in the link latency functions. 

As mentioned in Remark~\ref{rem:x_stack}, we will represent $x$ as a concatenation of $\bigbrace{x^{(o,d)}}_{(o,d) \in V \times V}$. As such, $x \in \mathbb{R}^{n^2m}$. Let $\bigbrace{(o_i,d_i)}_{i=1}^{n^2}$ be the order in which the unit $(o,d)$ flows are concatenated to produce $x$ so that

\begin{align*}
    x = \bigbra{
        \begin{tabular}{c}
            $x^{(o_1,d_1)}$\\
            $x^{(o_2,d_2)}$ \\
            $\vdots$ \\
            $x^{(o_{n^2}, d_{n^2})}$
        \end{tabular}
    }.
\end{align*}

The total flow on the links in the network when serving demand $\Lambda$ according to $x$ can then be written as:
\begin{align*}
    y := \sum_{(o,d) \in V \times V} \Lambda(o,d) x^{(o,d)} = B_\Lambda x
\end{align*}
where $B_\Lambda = \text{vec}(\Lambda) \otimes I_m$. Here $\text{vec}(\Lambda)$ is a $n^2$ dimensional row vector whose $i$th entry is $\Lambda(o_i,d_i)$, and $\otimes$ represents the Kronecker product. Then when $\Lambda$ is being routed according to $x$, the travel times on the links can be computed as
\begin{align*}
    Q y + c = Q B_\Lambda x + c,
\end{align*}
which means that the total travel time can be written as
\begin{align}
    \widehat{F}(x,\Lambda) := \sum_{e \in E} y_e f_e(y_e) &= y^\top \bigpar{ Q y + c } \nonumber \\
    &= x^\top B_\Lambda^\top \bigpar{ Q B_\Lambda x + c} \nonumber \\
    &= x^\top B_\Lambda^\top Q B_\Lambda x + c^\top B_\Lambda x. \label{eqn:transit-ex:time}
\end{align}

If we add a bit of $\ell_2$ regularization, we obtain
\begin{align*}
    F(x,\Lambda) &= \widehat{F}(x,\Lambda) + \frac{\alpha}{2} \norm{x}_2^2\\
    &= x^\top B_\Lambda^\top Q B_\Lambda x + c^\top B_\Lambda x + \frac{\alpha}{2} \norm{x}_2^2 \\
    &= x^\top \bigpar{ B_\Lambda^\top Q B_\Lambda + \frac{\alpha}{2} I} x + c^\top B_\Lambda x. 
\end{align*}
Recall that we use $H(\cdot,\Lambda)$ to denote the hessian of $F(\cdot, \Lambda)$ with respect to its first argument. We now make the following observations:
\begin{itemize}
    \item The Hessian of $F(x,\Lambda)$ with respect to $x$ is defined for all $x$ and is equal to $ 2 B_\Lambda^\top Q B_\Lambda + \alpha I$. Hence Assumption~\ref{assump:twice_diff} is satisfied.
    \item Since $Q$ is a diagonal matrix with non-negative entries, $Q \succeq 0$. This implies that $B_\Lambda^\top Q B_\lambda \succeq 0$. Hence $H(x,\Lambda) \succeq \alpha I$. This implies that $F$ is $\alpha$-strongly convex, and hence Assumption~\ref{assump:strong_convex} is satisfied. 
    \item Note that $\norm{B_\Lambda^\top Q B_\Lambda + (\alpha/2) I}_{op} \leq \norm{Q}_{op} \norm{B_\Lambda}_{op}^2 + \alpha/2 \leq \norm{Q}_{op} \norm{\text{vec}(\Lambda)}_{2}^2 \norm{I_m}_{op}^2 + \alpha/2$. Defining $\beta := n^2 \bigpar{ \max_e q_e } \lambda_{\max}^2 + \alpha/2$, we see that $\norm{H(x,\Lambda)}_{op} \leq \beta$ for all $x \in \cX$, meaning that $F$ is $\beta$-smooth, and thus Assumption~\ref{assump:smoothness} is satisfied. 
    \item By product rule,
    \begin{align*}
        \cD_{\Lambda(o_i,d_i)} \bigbra{ \nabla F } (x, \Lambda) &= \frac{\partial}{\partial\Lambda(o_i,d_i)} \bigbrace{ 2\bigpar{B_\Lambda^\top Q B_\Lambda + \alpha I} x + B_\Lambda^\top c } \\
        &= \frac{\partial}{\partial \Lambda(o_i,d_i)} \bigbrace{ 2 B_\Lambda^\top Q y + 2 \alpha x + B_\Lambda^\top c } \\
        &= 2 \bigpar{ (e_i \otimes I) Q y + B_\Lambda^\top Q x^{(o,d)}} + e_i \otimes c
    \end{align*}
    where $e_i$ is the $i$th standard basis vector for $\mathbb{R}^{n^2}$. By triangle inequality we can then conclude that
    \begin{align*}
        \norm{\cD_{\Lambda(o_i,d_i)} \bigbra{ \nabla F } (x, \Lambda)}_2 &\leq 2 \norm{ Qy }_2 + 2 \norm{B_\Lambda^\top Q x^{(o,d)}}_2 + \norm{c}_2 \\
        &\leq 2 \norm{Q}_{op} \norm{y}_2 + 2 \norm{B_\Lambda}_{op} \norm{Q}_{op} \norm{x^{(o,d)}}_2 + \norm{c}_2 \\
        &\overset{(a)}{\leq} 2 \norm{Q}_{op} \norm{y}_2 + 2 n \lambda_{\max} \norm{Q}_{op} \sqrt{m} + \norm{c}_2 \\
        &\overset{(b)}{\leq} 2 \lambda_{\max} \norm{Q}_{op} n^2 \sqrt{m} + 2 \lambda_{\max} \norm{Q}_{op} n \sqrt{m} + \norm{c}_2. 
    \end{align*}
    Here $(a)$ is due to the fact that $x^{(o,d)}$ is a unit $(o,d)$ flow, meaning $\norm{x^{(o,d)}}_{\infty} \leq 1$ and thus $\norm{x^{(o,d)}}_2 \leq \sqrt{m}$. Also, $B_\Lambda = \text{vec}(\Lambda) \otimes I_m$ implies that $\norm{B_\Lambda}_{op} \leq \norm{\mathds{1}_m^\top}_2 \norm{I_m}_{op} = n \lambda_{\max}$. $(b)$ is due to $\norm{y}_2 = \norm{B_\Lambda x^{(o,d)}}_2 \leq \norm{B_\Lambda}_{op} \norm{x^{(o,d)}}_2 \leq n^2 \sqrt{m} \lambda_{\max}$. 
    
    Hence Assumption~\ref{assump:bounded_double_derivative} is satisfied with $C = 2 \lambda_{\max} \norm{Q}_{op} \sqrt{m} n (n+1) + \norm{c}_2$. 
\end{itemize}

\section{Differentially Private Network Routing Optimization}\label{sec:algorithm}

Given the setup from Section~\ref{sec:netflow_reformulate}, our objective is to design a request-level differentially private algorithm that returns a near optimal solution to \eqref{eqn:opt:random:omniscient}. Since the true distribution $\cQ$ of demand is unknown, we will design an algorithm for \eqref{eqn:opt:random:datadriven} and show that under the assumptions described in Section~\ref{sec:model:demand_model}, the algorithm's solution is also near optimal for \eqref{eqn:opt:random:omniscient}. 

Computing a near-optimal solution to \eqref{eqn:opt:random:datadriven} while maintaining differential privacy may seem like a daunting task, but it turns out that a single modification to a well-known optimization algorithm gives an accurate and differentially private solution. 

We present a Private Projected Stochastic Gradient Descent algorithm, which is described in Algorithm~\ref{alg:network_dpsgd}. As the name suggests, Algorithm~\ref{alg:network_dpsgd} is a modified version of stochastic gradient descent. The algorithm conducts a single pass over the historical data, using each data point to perform a noisy gradient step (see line 6). The key difference between Algorithm~\ref{alg:network_dpsgd} and standard stochastic gradient descent is in line 11, where instead of returning the final gradient descent iterate, Algorithm~\ref{alg:network_dpsgd} returns a noisy version of the last iterate. 
Algorithm~\ref{alg:network_dpsgd} has the following privacy and performance guarantees.

\begin{theorem}[Privacy Guarantee for Algorithm~\ref{alg:network_dpsgd}]\label{thm:dpsgd_privacy}
Algorithm~\ref{alg:network_dpsgd} is $(\epsilon,\delta)$-differentially private under request level adjacency defined in Definition~\ref{def:trip_level_adjacency}.
\end{theorem}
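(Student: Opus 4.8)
The plan is to follow the output-perturbation strategy of \cite{WuEtAl17}: bound the $\ell_2$-sensitivity of the pre-noise gradient-descent iterate produced by the main loop of Algorithm~\ref{alg:network_dpsgd} with respect to request-level-adjacent datasets, and then conclude $(\epsilon,\delta)$-privacy of the perturbed value returned in line 11 via the Gaussian mechanism together with post-processing. The reformulation of Section~\ref{sec:netflow_reformulate} is exactly what makes this viable: the feasible set $\cX$ of unit network flows is data-independent, so the projection $\Pi_{\cX}$ applied at each iteration leaks no information about the dataset, and all data dependence is confined to the per-sample gradients $\nabla_x F(\cdot,\Lambda_k)$.

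For the sensitivity bound, fix RLA-adjacent datasets $L = (\Lambda_1,\dots,\Lambda_N)$ and $L' = (\Lambda_1',\dots,\Lambda_N')$. By Definition~\ref{def:trip_level_adjacency} there is a single index $t$ with $\Lambda_k = \Lambda_k'$ for all $k \neq t$, and $\Lambda_t,\Lambda_t'$ agree except in one coordinate $(o,d)$, where they differ by at most $1/T$. Couple the two runs, using the same initialization and the same internal randomness; then the iterates $x_k$ and $x_k'$ coincide for $k \le t$. At step $t$, non-expansiveness of $\Pi_{\cX}$ gives
$$
\norm{x_{t+1}-x_{t+1}'}_2 \;\le\; \eta_t \, \norm{\nabla_x F(x_t,\Lambda_t) - \nabla_x F(x_t,\Lambda_t')}_2 \;\le\; \frac{\eta_t C}{T},
$$
where the last inequality comes from integrating $\cD_{\Lambda(o,d)}[\nabla_x F]$ along the segment from $\Lambda_t$ to $\Lambda_t'$, which varies only in coordinate $(o,d)$, using Assumption~\ref{assump:bounded_double_derivative} and $\abs{\Lambda_t(o,d)-\Lambda_t'(o,d)} \le 1/T$. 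For $k > t$ both runs use the \emph{same} sample $\Lambda_k$, and by Assumptions~\ref{assump:strong_convex} and \ref{assump:smoothness}, for a suitably small step size the update map $x \mapsto \Pi_{\cX}(x - \eta_k \nabla_x F(x,\Lambda_k))$ is a contraction with modulus $1-\eta_k\alpha$ (co-coercivity of the gradient of a smooth strongly convex function plus non-expansiveness of the projection); iterating,
$$
\norm{x_{N+1}-x_{N+1}'}_2 \;\le\; \frac{\eta_t C}{T}\prod_{k=t+1}^{N}(1-\eta_k\alpha) \;\le\; \frac{C}{T}\,\max_{1\le t\le N}\Bigl(\eta_t \prod_{k=t+1}^{N}(1-\eta_k\alpha)\Bigr) \;=:\; \Delta_2 ,
$$
uniformly over adjacent pairs. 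This is the $\ell_2$-sensitivity of the iterate fed into line 11.

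Line 11 returns this iterate plus an independent isotropic Gaussian whose standard deviation is calibrated to $\Delta_2$, $\epsilon$, and $\delta$ (the classical choice $\sigma = \Delta_2\sqrt{2\ln(1.25/\delta)}/\epsilon$ suffices). The Gaussian-mechanism guarantee then yields $(\epsilon,\delta)$-differential privacy for the map $L \mapsto x_{N+1}(L) + Z$ under RLA, and since anything Algorithm~\ref{alg:network_dpsgd} outputs is a deterministic function of this quantity with $Z$ independent of $L$, post-processing immunity of differential privacy completes the proof. (If line 6 injects fresh noise at every iteration, the same one-step sensitivity computation instead feeds a composition or privacy-amplification-by-iteration argument in the style of \cite{FeldmanMTT18}; I would therefore isolate the one-step bound as a standalone lemma so it serves either variant.)

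The step I expect to be most delicate is the contraction estimate: one must verify that the step-size schedule actually satisfies the regime in which $x \mapsto \Pi_{\cX}(x - \eta_k \nabla_x F(x,\Lambda_k))$ is $(1-\eta_k\alpha)$-Lipschitz (e.g.\ $\eta_k \le 2/(\alpha+\beta)$ throughout), and to handle cleanly the boundary case $t = N$, where there are no post-perturbation contractions and the bound degenerates to $\eta_N C/T$. A minor but necessary point is justifying the mean-value step: Assumption~\ref{assump:bounded_double_derivative} bounds the derivative of $\nabla_x F$ precisely in the single coordinate along which $\Lambda_t$ and $\Lambda_t'$ differ, and the segment joining them stays in $\mathbb{R}_+^{n\times n}$, so the estimate is legitimate.
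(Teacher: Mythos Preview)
Your proposal is correct and follows essentially the same strategy as the paper: bound the $\ell_2$-sensitivity of the final pre-noise iterate by chaining per-step contractions, then invoke the Gaussian mechanism and post-processing. The paper phrases the contraction argument differentially (chain rule on $\cD_{\Lambda_t(o,d)}[x_N]$ and the Hessian bound $\alpha I \preceq H \preceq \beta I$ to get $\norm{I-\eta_k H}_{op}\le 1-\eta_k\alpha$ when $\eta_k\le 1/\beta$), whereas you use a direct Lipschitz/coupling argument; yours is arguably cleaner because non-expansiveness of $\Pi_{\cX}$ slots in naturally, while the paper's derivative computation silently drops the projection. The one substantive piece you have not yet carried out is evaluating $\max_t \eta_t \prod_{k>t}(1-\eta_k\alpha)$ for the specific schedule $\eta_{k}=\min\bigl(1/(\alpha k),\,\min(1,2\alpha)/\beta\bigr)$ and showing it is at most $\min\bigl(\min(1,2\alpha)/\beta,\,1/(\alpha N)\bigr)$, which is exactly the $s$ the algorithm calibrates to; the paper does this by a short case split on whether $t$ lies before or after the index $t_0$ at which the step size transitions from the constant $\min(1,2\alpha)/\beta$ to the decaying $1/(\alpha k)$, bounding $\prod(1-\eta_k\alpha)\le\exp(-\alpha\sum\eta_k)$ and comparing the sum to an integral. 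Two small indexing corrections: since $x_k$ uses $\Lambda_k$ (line~6), the coupled iterates coincide only for $k\le t-1$, and the final pre-noise iterate is $x_N$, not $x_{N+1}$.
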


\begin{theorem}[Performance Guarantee for Algorithm~\ref{alg:network_dpsgd}]\label{thm:dpsgd_accuracy}
If $\Lambda_1,...,\Lambda_N \iid \cQ$, and $x^*$ is a solution to \eqref{eqn:opt:random:omniscient}, then the output $x_{\text{alg}}$ of Algorithm~\ref{alg:network_dpsgd} satisfies:
\begin{align*}
    \mathbb{E} \bigbra{ \norm{ x_{\text{alg}} - x^* }_2 } &\leq \frac{1}{\sqrt{N}} \frac{K C \exp \bigpar{ \frac{\beta^2 \pi^2}{12 \alpha^2}} }{\alpha} + \frac{n \sqrt{m}}{N} \bigpar{ \frac{\beta \exp \bigpar{ \frac{\beta^2 \pi^2}{12 \alpha^2} } }{\alpha \min(1,2\alpha)} + \frac{C}{\epsilon \alpha T} \sqrt{ 2\ln \bigpar{ \frac{1.25}{\delta} } } } \\
    &\leq O \bigpar{ \frac{1}{\sqrt{N}} } + O \bigpar{ \frac{1}{\epsilon N} \sqrt{\ln \frac{1}{\delta} } }
\end{align*}
\end{theorem}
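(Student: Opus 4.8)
The plan is to bound the error of Algorithm~\ref{alg:network_dpsgd} by splitting it, via the triangle inequality, into an \emph{optimization} error coming from the stochastic gradient iterations and a \emph{perturbation} error coming from the privacy noise injected in line~11, and to control the two pieces separately.

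\textbf{Decomposition and the perturbation term.} Write the output as $x_{\text{alg}} = x_N + \xi$, where $x_N$ is the last stochastic gradient iterate and $\xi$ is the Gaussian vector added in line~11 with per-coordinate standard deviation $\sigma$. Then
\begin{align*}
    \mathbb{E}\bigbra{\norm{x_{\text{alg}} - x^*}_2} \leq \mathbb{E}\bigbra{\norm{x_N - x^*}_2} + \mathbb{E}\bigbra{\norm{\xi}_2}.
\end{align*}
Since $\xi$ is an isotropic Gaussian in dimension $d = n^2 m$ (Remark~\ref{rem:x_stack}), $\mathbb{E}\bigbra{\norm{\xi}_2} \leq \sqrt{\mathbb{E}\bigbra{\norm{\xi}_2^2}} = \sigma\, n\sqrt m$. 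The proof of Theorem~\ref{thm:dpsgd_privacy} already shows that the $\ell_2$-sensitivity of $x_N$ under a request-level-adjacent change is of order $\frac{C}{\alpha T N}$ --- Assumption~\ref{assump:bounded_double_derivative} bounds by $\frac{C}{T}$ the change in any single stochastic gradient, and the contraction of the projected gradient step under Assumptions~\ref{assump:strong_convex}--\ref{assump:smoothness} damps its effect on the remaining iterates --- so the Gaussian mechanism takes $\sigma$ of order $\frac{C}{\epsilon \alpha T N}\sqrt{2\ln(1.25/\delta)}$. Substituting gives the last term of the claimed bound, and the rest of the proof analyzes $\mathbb{E}\bigbra{\norm{x_N - x^*}_2}$.

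\textbf{Reducing the optimization term to SGD on the population objective.} Let $G(x) := \mathbb{E}_{\Lambda\sim\cQ}\bigbra{F(x,\Lambda)}$. By Assumptions~\ref{assump:strong_convex} and~\ref{assump:smoothness} it is $\alpha$-strongly convex and $\beta$-smooth (these properties pass through the expectation), its minimizer over the closed convex \emph{public} set $\cX$ of unit network flows is exactly the $x^*$ in the statement, and $\norm{\nabla G(x)}_2 \leq K$ everywhere by Jensen and Assumption~\ref{assump:bounded-var-grads}. Because $\Lambda_1, \dots, \Lambda_N \iid \cQ$ and Algorithm~\ref{alg:network_dpsgd} makes a single pass, conditioned on the history (hence on $x_{k-1}$) the update direction $\nabla_x F(x_{k-1},\Lambda_k)$ is an unbiased estimate of $\nabla G(x_{k-1})$ with second moment at most $K^2$; thus $x_N$ is literally the $N$-th iterate of projected SGD on $G$ over $\cX$ and can be compared to $x^*$ directly, with no separate generalization-gap argument. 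Using non-expansiveness of $\Pi_{\cX}$, the optimality identity $x^* = \Pi_{\cX}\bigpar{x^* - \eta_k \nabla G(x^*)}$, and the strong convexity and smoothness of $G$, I would derive the one-step recursion
\begin{align*}
    a_k \leq \bigpar{1 - 2\eta_k\alpha + \eta_k^2\beta^2}\, a_{k-1} + \eta_k^2 K^2, \qquad a_k := \mathbb{E}\bigbra{\norm{x_k - x^*}_2^2},
\end{align*}
for the step sizes $\eta_k = \Theta\bigpar{1/(\alpha k)}$ used by the algorithm.

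\textbf{Unrolling, and the expected obstacle.} Unrolling this recursion from $k = 1$ to $N$, the homogeneous products $\prod_{k=j+1}^N \bigpar{1 - 2\eta_k\alpha + \eta_k^2\beta^2}$ are controlled with $1 + u \leq e^u$ and $\sum_{k\geq 1} k^{-2} = \pi^2/6$, which is exactly where the factor $\exp\bigpar{\beta^2\pi^2/(6\alpha^2)}$ appears; taking the final square root (Jensen gives $\mathbb{E}\norm{x_N - x^*}_2 \leq \sqrt{a_N}$, then $\sqrt{u+v} \leq \sqrt u + \sqrt v$) turns it into the $\exp\bigpar{\beta^2\pi^2/(12\alpha^2)}$ of the statement. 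The ``stochastic'' part of the unrolled sum yields a term of order $\frac{K}{\alpha \sqrt N}$, times that exponential factor and the Assumption~\ref{assump:bounded_double_derivative} constant $C$ (which bounds, over all data, the data-sensitivity of the gradients) --- this is the first term; the ``transient'' part yields a term of order $\frac{\norm{x_0 - x^*}_2}{N}$, and since every coordinate of a unit network flow lies in $[0,1]$ (Remark~\ref{rem:total_net_flow}) we have $\norm{x_0 - x^*}_2 \leq n\sqrt m$, giving the remaining $O\bigpar{n\sqrt m / N}$ term, with the factor $\beta/(\alpha \min(1, 2\alpha))$ falling out of the same bookkeeping. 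Adding the perturbation term and reading off the dependence on $N, \epsilon, \delta$ gives the $O\bigpar{1/\sqrt N} + O\bigpar{\frac{1}{\epsilon N}\sqrt{\ln(1/\delta)}}$ form. I expect the constant-tracking in the unrolling to be the hard part: one must keep the homogeneous products, the transient term (which has to decay like $1/N^2$ in squared distance, i.e.\ $1/N$ in distance, so as to be dominated by the $1/\sqrt N$ statistical term), and the first few indices $k$ where $1 - 2\eta_k\alpha + \eta_k^2\beta^2$ can exceed $1$ (because $\eta_k\beta > 1$) all under control at once, and verify that the surviving constant is exactly $\exp\bigpar{\beta^2\pi^2/(12\alpha^2)}$ rather than something larger. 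By contrast, once the sensitivity bound from Theorem~\ref{thm:dpsgd_privacy} is in hand, the perturbation term is routine.
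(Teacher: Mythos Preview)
Your approach is essentially the paper's: the same split into an SGD term and a Gaussian-noise term, the same one-step recursion $a_k \leq (1 - 2\alpha\eta_k + \beta^2\eta_k^2)\,a_{k-1} + \eta_k^2 K^2$ (from non-expansiveness of $\Pi_\cX$, the optimality identity for $x^*$, strong convexity, and smoothness), unrolled with $1+u\leq e^u$ and $\sum k^{-2}=\pi^2/6$, then Jensen, the diameter bound $\norm{x_0-x^*}_2\leq n\sqrt m$, and the noise norm. One small slip: you write $x_{\text{alg}} = x_N + \xi$, but the algorithm reprojects onto $\cX$ after adding noise; the paper inserts $x^* = \Pi_\cX(x^*)$ and uses non-expansiveness of $\Pi_\cX$ at this step, which only tightens your bound, so the argument goes through unchanged.
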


\noindent In particular, $x_{\text{alg}}$ is $(\epsilon,\delta)$-differentially private and converges to $x^*$ as $N \rightarrow \infty$, meaning that privacy and asymptotic optimality are simultaneously achieved. 

See Appendix~\ref{pf:thm:dpsgd_privacy} and Appendix~\ref{pf:thm:dpsgd_accuracy} for proofs of Theorem~\ref{thm:dpsgd_privacy} and Theorem~\ref{thm:dpsgd_accuracy} respectively.

\begin{algorithm}
\caption{Private Projected Stochastic Gradient Descent}\label{alg:network_dpsgd}
\textbf{Input:} Historical demand data $\Lambda_1, ..., \Lambda_N$, Desired privacy level $(\epsilon,\delta)$\;
\textbf{Output:} Unit network flow $x \in \cX$\;
Initialize $x_0 \in \cX$ arbitrarily \;
\For{$1 \leq k \leq N$}{
    $\eta_{k-1} \leftarrow \min \bigpar{ \frac{1}{\alpha k}, \frac{\min(1,2\alpha)}{\beta} }$\;
    $x_k \leftarrow \Pi_\cX \bigpar{ x_{k-1} + \eta_{k-1} \nabla_x F(x_{k-1}, \Lambda_k)}$\;
}
$s \leftarrow \frac{C}{T} \min \bigpar{ \frac{\min(1,2\alpha)}{\beta}, \frac{1}{\alpha N} }$\;
$\sigma^2 \leftarrow 2 \frac{s^2}{\epsilon^2} \ln \bigpar{ \frac{1.25}{\delta} }$\;
$Z \sim \mathcal{N}\bigpar{0, \sigma^2 I}$ \;
$x_{\text{alg}} \leftarrow \Pi_\cX \bigpar{ x_N + Z} $\;
\textbf{Return} $x_{\text{alg}}$\;
\end{algorithm}

\subsection{Discussion}

Carefully adding noise to specific parts of existing algorithms is a principled approach for developing differentially private algorithms \cite{DworkMNS06, FeldmanMTT18, FeldmanKT20}. The main challenge in such an approach is determining a) where and b) how much noise to add. Suppose the goal is to (approximately) compute a query function $f(L)$ on a data set $L$ in a differentially private way. The latter question can be addressed by measuring the sensitivity of the desired query function. 

\begin{definition}[$\ell_2$ sensitivity]
Consider a function $f : \cL \rightarrow \mathbb{R}^d$ which maps data sets to real vectors. For a given adjacency relation $\text{Adj}$, the $\ell_2$ sensitivity of $f$, denoted $s_{\text{Adj}}(f)$, is the largest achievable difference in function value between adjacent data sets. Namely, 
\begin{align*}
    s_{\text{Adj}}(f) := \max_{ \substack{ L_1, L_2 \in \cL \\ \text{Adj}(L_1, L_2) = 1 } } \norm{f(L_1) - f(L_2)}_2.
\end{align*}
\end{definition}

Once the sensitivity of the query function is known, the required noise distribution can be determined using the Gaussian mechanism as described in Theorem \ref{thm:dwork_roth_14}.

\begin{theorem}[From \cite{DworkR14}]\label{thm:dwork_roth_14}
Suppose $f: \cD \rightarrow \mathbb{R}^p$ maps datasets to real vectors. If $s_{\text{Adj}}$ is the $\ell_2$ sensitivity of $f$, then $\widehat{f}(D) := f(D) + Z$ where $Z \sim \mathcal{N}\bigpar{0, 2 \frac{s_{\text{Adj}}^2}{\epsilon^2} \ln \bigpar{\frac{1.25}{\delta}} I_p}$ is $(\epsilon,\delta)$-differentially private with respect to the adjacency relation \text{Adj}. 
\end{theorem}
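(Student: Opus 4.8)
The plan is to deduce this multivariate statement from a one–dimensional Gaussian tail estimate by analyzing the \emph{privacy loss random variable}. Fix adjacent datasets $D_1,D_2$, abbreviate $\Delta := s_{\text{Adj}}$ and $\sigma^2 := 2\frac{\Delta^2}{\epsilon^2}\ln\bigpar{\frac{1.25}{\delta}}$, and set $v := f(D_1) - f(D_2)$, so $\norm{v}_2 \le \Delta$. Since the densities of $\widehat f(D_1)$ and $\widehat f(D_2)$ are translates of the isotropic Gaussian density on $\mathbb{R}^p$, writing a generic point as $x = f(D_1) + z$ and completing the square gives a log-density-ratio that is affine in $z$:
\begin{align*}
\ln \frac{p_{\widehat f(D_1)}(x)}{p_{\widehat f(D_2)}(x)} = \frac{\langle z,v\rangle}{\sigma^2} + \frac{\norm{v}_2^2}{2\sigma^2}.
\end{align*}
I would take this identity as the starting point.

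Next I would record the standard event-splitting reduction, which shows it suffices to bound the upper tail of the right-hand side when $z \sim \mathcal{N}(0,\sigma^2 I)$: concretely, the goal becomes $\mathbb{P}_z\bigbra{ \frac{\langle z,v\rangle}{\sigma^2} + \frac{\norm{v}_2^2}{2\sigma^2} > \epsilon } \le \delta$. Granting this, for an arbitrary event $E$ one partitions $E$ into the subset $E_{\le}$ where the log-ratio is at most $\epsilon$ and the subset $E_{>}$ where it exceeds $\epsilon$; on $E_{\le}$ one has $p_{\widehat f(D_1)} \le e^\epsilon p_{\widehat f(D_2)}$ pointwise, so that part of the integral contributes at most $e^\epsilon\,\mathbb{P}[\widehat f(D_2)\in E]$, while $\mathbb{P}[\widehat f(D_1)\in E_{>}] \le \delta$ by the tail bound; summing yields the defining inequality of $(\epsilon,\delta)$-differential privacy. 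Exchanging the roles of $D_1$ and $D_2$ merely replaces $v$ with $-v$, and since $-z$ has the same law as $z$, the reverse inequality follows from the identical argument, so no extra work is needed for symmetry.

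Finally I would prove the tail bound itself. Decomposing $z$ along $v$, the scalar $z_1 := \langle z, v/\norm{v}_2\rangle$ is $\mathcal{N}(0,\sigma^2)$ and $\langle z,v\rangle = \norm{v}_2 z_1$, so the event of interest is $\{z_1 > \frac{\epsilon\sigma^2}{\norm{v}_2} - \frac{\norm{v}_2}{2}\}$; the threshold is decreasing in $\norm{v}_2$, so the worst case is $\norm{v}_2 = \Delta$. Substituting the chosen value of $\sigma$ collapses the threshold, after dividing by $\sigma$, to $t := \sqrt{2\ln(1.25/\delta)} - \frac{\epsilon}{2\sqrt{2\ln(1.25/\delta)}}$, and it remains to verify $\mathbb{P}[\mathcal{N}(0,1) > t] \le \delta$ using the Gaussian tail inequality $\mathbb{P}[\mathcal{N}(0,1) > t] \le \frac{1}{t\sqrt{2\pi}}e^{-t^2/2}$. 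I expect this last verification to be the main obstacle: it is the only genuinely delicate computation, it is where the constant $1.25$ (rather than $1$) is needed to make the bound close, and it is where the customary restriction $\epsilon \le 1$ enters, ensuring the subtracted term $\frac{\epsilon}{2\sqrt{2\ln(1.25/\delta)}}$ is small enough. Everything after that is routine algebra.
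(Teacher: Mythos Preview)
The paper does not prove this theorem at all: it is quoted verbatim as a known result from \cite{DworkR14} and used as a black box in the proof of Theorem~\ref{thm:dpsgd_privacy}. Your proposal reproduces the standard privacy-loss-random-variable argument from that reference (log-density ratio, event splitting into $E_{\le}$ and $E_{>}$, reduction to a one-dimensional Gaussian tail, worst case at $\norm{v}_2=\Delta$), and the outline is correct, including your identification of the final tail verification as the only place where the constant $1.25$ and the restriction $\epsilon\le 1$ matter.
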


Calculating the sensitivity of the simple query functions (e.g., counting, voting, selecting the maximum value) is relatively easy, making the Gaussian mechanism straightforward to apply. However, for more complicated functions, noise calibration becomes more involved. 

Algorithm~\ref{alg:network_dpsgd} is an application of the Gaussian mechanism. Moreover, Theorem~\ref{thm:dpsgd_accuracy} shows that the suboptimality of Algorithm~\ref{alg:network_dpsgd} is $O \bigpar{\frac{1}{\sqrt{N}}}$. The asymptotic optimality of Algorithm~\ref{alg:network_dpsgd} comes from the fact that the $\ell_2$ sensitivity of the final gradient descent iterate is actually converging to zero as $N$ approaches infinity. This fact enables us to add less noise as $N \rightarrow \infty$. Indeed, the Gaussian noise added to the final gradient descent iterate in Algorithm~\ref{alg:network_dpsgd} has standard deviation which is $O \bigpar{ \frac{1}{N} }$.

In the remainder of this section, we sketch some of the mathematical ideas behind the perhaps non-intuitive result that the sensitivity of gradient descent converges to zero as the number of data points increases. For simplicity of exposition we will a) use scalar notation in place of vector notation for the sake of readability, and b) consider the simpler case of unconstrained gradient descent, which removes the need to perform projections. As a reminder, the full proof can be found in Appendix~\ref{pf:thm:dpsgd_privacy}.

Given a mobility data set, we use $x_t(L)$ to denote the $t$-th iterate of Algorithm~\ref{alg:network_dpsgd} when using data set $L$. It is sufficient to show that the $\max_t \abs{ \frac{d x_N}{d \Lambda_t} }$ converges to zero for every $t$. This is because the sensitivity is obtained by integrating the derivative:
\begin{align*}
    \norm{ x_N(L_1) - x_N(L_2) }_2 = \norm{ \int_{L_1}^{L_2} \frac{d x_N}{d L} dL }_2.
\end{align*}
Because $L_1, L_2$ are adjacent, the distance between them is finite, and thus the above integral will converge to zero if its integrand converges to zero. 

With this in mind, by chain rule, we can write
\begin{align*}
    \frac{d x_N}{d \Lambda_t} &= \frac{d x_t}{d \Lambda_t} \frac{d x_{t+1}}{d x_t} \frac{d x_{t+2}}{d x_{t+1}} ... \frac{d x_{N}}{d x_{N-1}}.
\end{align*}
Next, by using properties of smooth and strongly convex functions, we show that $\abs{ \frac{d x_{t+1}}{d x_t} } \leq 1 - \theta$ for some positive constant $\theta$. This result implies
\begin{align*}
    \abs{ \frac{d x_N}{d \Lambda_t} } &= \abs{\frac{d x_t}{d \Lambda_t}} (1-\theta)^{N-t}.
\end{align*}
Next, recalling that $x_t = x_{t-1} - \eta_t \nabla F(x_{t-1}, \Lambda_t)$, note that $x_{t-1}$ is computed from the first $t-1$ gradient steps, which only depend on the first $t-1$ data points $\Lambda_1, ..., \Lambda_{t-1}$. Hence the $\frac{d x_{t-1}}{d \Lambda_t} = 0$. From this we see that
\begin{align*}
    \frac{d x_t}{d \Lambda_t} &= \frac{d}{d \Lambda_t} \bigpar{ x_{t-1} - \eta_t \nabla F(x_{t-1}, \Lambda_t) } \\
    &= -\eta_t \frac{d}{d \Lambda_t} \nabla F(x_{t-1}, \Lambda_t).
\end{align*}
By using Assumption~\ref{assump:bounded_double_derivative} we have the following inequality:
\begin{align*}
    \abs{ \frac{d x_t}{d \Lambda_t} } = \abs{ \eta_t \frac{d}{d \Lambda_t} \nabla F(x_{t-1}, \Lambda_t) } \leq \eta_t C. 
\end{align*}
Putting everything together, we have
\begin{align*}
    \abs{ \frac{d x_N}{d \Lambda_t} } &= C \eta_t (1-\theta)^{N-t}.
\end{align*}
The choice of stepsize in Algorithm~\ref{alg:network_dpsgd} ensures three things: a) $\eta_t \leq 2$ for every $t$, b) $\eta_t$ is non-increasing, and c) $\eta_t \rightarrow 0$ as $t \rightarrow \infty$. Given these facts, there are two cases to consider for $\abs{ \frac{d x_N}{d \Lambda_t} }$.
\begin{enumerate}
    \item[Case 1:] $t \leq N/2$. In this case, we can upper bound $\eta_t \leq 2$ and $(1-\theta)^{N/2}$. Hence $\abs{ \frac{d x_N}{d \Lambda_t} } \leq 2 C (1-\theta)^{N/2}$. Since $1-\theta < 1$, this converges to zero as $N \rightarrow \infty$. 
    \item[Case 2:] $t \geq N/2$. In this case, we simply upper bound $(1-\theta)^{N-t} \leq 1$ and $\eta_t \leq \eta_{N/2}$ which gives $\abs{ \frac{d x_N}{d \Lambda_t} } \leq C \eta_{N/2}$ which converges to zero as $N \rightarrow \infty$. 
\end{enumerate}
Finally, this shows that
\begin{align*}
    \max_t \abs{ \frac{d x_N}{d \Lambda_t} } \leq \max \bigpar{ C \eta_{N/2}, 2 C (1-\theta)^N },  
\end{align*}
and since both terms in the maximum are converging to zero, we have $\lim_{N \rightarrow \infty} \abs{ \frac{d x_N}{d \Lambda_t} } = 0$. 

\section{Experiments}

Differentially private mechanisms add noise to provide a principled privacy guarantee for individual level user data. One immediate question is the degree to which the added noise degrades the quality of the obtained solution. In the previous section, we addressed this question theoretically in Theorem~\ref{thm:dpsgd_accuracy} by showing that Algorithm ~\ref{alg:network_dpsgd} is both differentially private and asymptotic optimal. In this section, we present empirical studies on privacy-performance trade-offs by comparing our algorithm's performance to that of a non-private network routing approach. To this end, we simulate a transportation network to evaluate the performance of our algorithm and the non-private optimal solution to the network flow problem \eqref{eqn:opt:random:datadriven}. 

We describe the dataset used for the experiments in Section \ref{sec:expt:data}. Next, the algorithms used in the experiments are described in Section \ref{sec:expt:alg}. In Section \ref{ssec:performance}, we evaluate the practical performance of our algorithm for different values of $N$. In particular, we study the effect of the number of samples, and quantify the loss in system performance we may experience due to the introduction of our privacy-preserving algorithm. Finally, in Section \ref{ssec:sensitivity}, we study the sensitivity of our algorithms to the simulation parameters. In particular, we study the convergence of the routing policy with increasing data for different demands, edge latency function, and the magnitude of the regularization loss introduced to convexify the edge latency function.

\subsection{Data Set}\label{sec:expt:data}

We use data for the Sioux Fall network, which is available in the Transportation Network Test Problem (TNTP) dataset \cite{tntp}. This network has 24 nodes, 76 edges, and 528 OD pairs (see Figure \ref{fig:sioux_falls} for an overview of the network topology). The distribution of mean hourly demand across different OD pairs is shown in Figure \ref{fig:sioux_falls}. The mean OD demand is 682 vehicles/ hour. Furthermore, for more context on the scale of the network, the travel time on edges ranges from 2 to 10 minutes. Trip data is generated each hour, i.e., $T = 60$, from a Poisson distribution with a mean value that is given by the data. Our objective is to learn a routing policy for these trips that minimizes the total travel costs for all users. To model congestion, we use the link latency model described in Section \ref{sec:model:transit-net-ex}. In particular, for each $e \in E$, we estimate the free flow latency as $c_e$ directly from the data set. The sensitivity of the latency function to the traffic volume on the link, denoted by $q_e$ is chosen such that the travel time on the link is doubled when the link flow equals the link capacity. In later experiments, we will change this factor to study the sensitivity of our algorithm.

\subsection{Algorithms}\label{sec:expt:alg}


Throughout Sections \ref{ssec:performance} and \ref{ssec:sensitivity} we compare the performance of two different algorithms: a Baseline algorithm and Algorithm 1, which we describe below. \\

\noindent \textbf{Baseline:} This algorithm computes the minimum travel time solution to \eqref{eqn:opt:random:datadriven} for the Sioux Fall model described in Section~\ref{sec:model:transit-net-ex}. Recall that in the Section \ref{sec:model:transit-net-ex} model, \eqref{eqn:transit-ex:time} describes the travel time incurred when serving demand $\Lambda$ with a routing policy $x$. Given a data set $\Lambda_1,...,\Lambda_N$, it computes the solution to the following minimization problem: $\min_{x \in \cX} \frac{1}{N} \sum_{i=1}^N x^\top B_{\Lambda_i}^\top Q B_{\Lambda_i} x + c^\top B_{\Lambda_i} x$. \\

\noindent \textbf{Algorithm 1:} The travel time function described in \eqref{eqn:transit-ex:time} is not strongly convex because $B_\Lambda^\top Q B_\Lambda$ is rank deficient. In order to satisfy Assumption~\ref{assump:strong_convex}, we introduce an $\ell_2$ regularization. Namely, given a data set $\Lambda_1,...,\Lambda_N$, we apply Algorithm~\ref{alg:network_dpsgd} to the following minimization problem: $\min_{x \in \cX} \alpha \norm{x}_2^2 + \frac{1}{N} \sum_{i=1}^N x^\top B_{\Lambda_i}^\top Q B_{\Lambda_i} x + c^\top B_{\Lambda_i} x$. \\

For the Sioux Falls network, the parameters for implementing Algorithm \ref{alg:network_dpsgd} are set as follows. The smoothness parameter $\beta$ is set to be the largest eigenvalue of $B_\Lambda^\top Q B_\Lambda$, which is equal to $2.08 \times 10^7$. We set $C=\beta$, and the regularizer parameter $\alpha = 10^{4}$. It is easy to check that these values satisfy the assumptions described in Section \ref{sec:model:transit-net-ex}. Note that several different values of $\alpha$ could have been used to convexify the latency function. However, our choice is governed by two factors. A small value of $\alpha$ will ensure that the regularized objective is a good estimate to the true objective, which is desirable. However, smaller values of $\alpha$ will lead to a larger condition number (which is $\beta/\alpha$), resulting in slower convergence and necessitating more data for achieving a similar performance. Thus, the particular value of $\alpha =10^{4}$ balances both these factors for our problem instance. In section \ref{ssec:sensitivity}, we present a sensitivity analysis with different values of $\alpha$.

\begin{figure}
     \centering
     \begin{subfigure}[c]{0.25\textwidth}
         \centering
         \includegraphics[trim={3cm, 3cm, 3cm, 3cm}, clip, width=\textwidth]{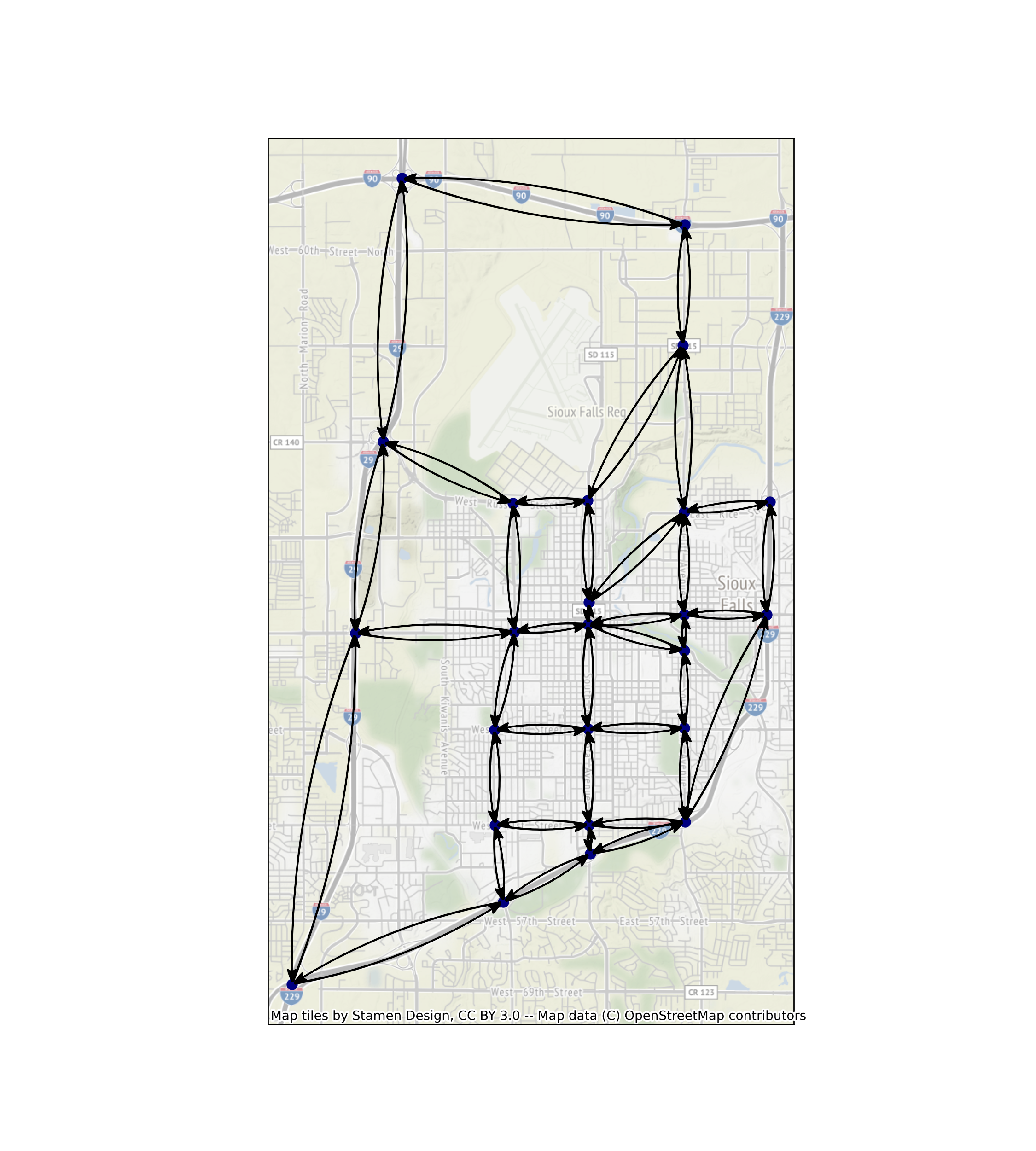}
         \caption{Network topology}
         \label{fig:topology}
     \end{subfigure}
     \hfill
     \begin{subfigure}[c]{0.36\textwidth}
         \centering
         \includegraphics[width=\textwidth]{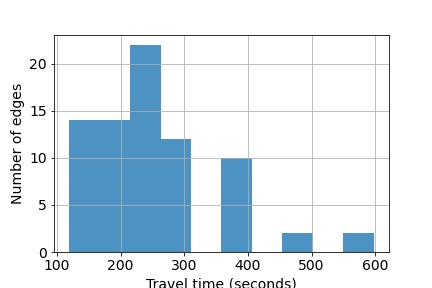}
         \caption{Travel time on edges}
         \label{fig:travel_time}
     \end{subfigure}
     \hfill
     \begin{subfigure}[c]{0.36\textwidth}
         \centering
         \includegraphics[width=\textwidth]{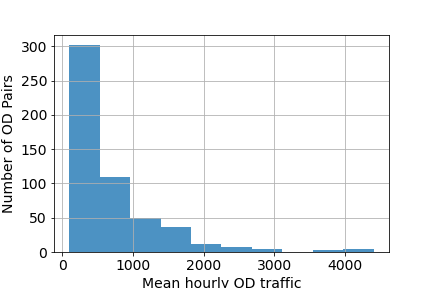}
         \caption{Mean demand between OD pairs}
         \label{fig:mean_demand}
     \end{subfigure}
        \caption{Description of the Sioux Falls dataset}
        \label{fig:sioux_falls}
\end{figure}

\subsection{Performance of Algorithm \ref{alg:network_dpsgd}} \label{ssec:performance}

In this subsection, we study the convergence of the routing policy with each step of the gradient descent performed by our algorithm. Since the impact of a routing policy is directly reflected in the total travel time, we plot the travel cost induced by a the learned routing policy as a function of the iterates. Recall that the number of iterations for a data set with $N$ data points is $N$ according to Algorithm \ref{alg:network_dpsgd}. In our experiments we evaluate the cost of a routing policy $x$ as $F(x, \frac{1}{N}\sum_{i=1}^{N} \Lambda_i)$ instead of using the sample average $\frac{1}{N}\sum_{i=1}^{N}F(x, \Lambda_i)$. This approximation is done solely for improving the run-time of our experiment (by up to 50X) and introduces less than $10^{-4}$\% error in the evaluation of the system costs. Similarly, the optimal solution is also computed by minimizing the objective function $F(x, \frac{1}{N}\sum_{i=1}^{N} \Lambda_i)$ instead of the term described in Equation \ref{eqn:opt:random:datadriven} to achieve a computational speed up of 30X. Evaluating the cost of this solution using the average-demand approximation results in errors less than $10^{-7}$\% error. Thus, in these experiments, we define the optimal costs as the approximate costs obtained through this procedure. Further details justifying these approximation are presented in Appendix \ref{app:approximations}

For our first set of experiments, we compare the objective values obtained by Algorithm \ref{alg:network_dpsgd} and the Baseline as a function of sample size $N$. In Figure \ref{fig:convergence_varyingN}, we plot the ratio of Algorithm \ref{alg:network_dpsgd}'s cost to Baseline's cost over the course of iterations for different values of $N$. For this set of experiments, we set the privacy parameters to $\epsilon=0.1$ and $\delta=0.1$. Note that Baseline's cost is fixed for a given $N$ and is computed offline to serve as a benchmark. For a given $N$, we only have $N$ iterations since each data point is only used once in Algorithm \ref{alg:network_dpsgd} to maintain privacy. For all three experiments ($N=10$, $N=25$, $N=50$), the cost decreases monotonically with additional iterations. It is therefore not surprising that the final costs for the $N=50$ case is the lowest, as we expect the routing policy learned with 50 data points to be better than the routing policy learned from 10 data points. It is however very interesting that even with a random routing policy initialization, our algorithm finds solutions that are just around 2\% away from the optimal policy. We suspect that this 2\% gap is due to the fact that the two algorithms have slightly different objective functions. Indeed, Algorithm~\ref{alg:network_dpsgd} has an $\ell_2$ regularizer in the objective, but Baseline does not. Our results therefore show that although the convergence is guaranteed only in the limit $N\rightarrow\infty$, we can obtain practically useful solutions with a relatively small number of data points.

\begin{figure}
    \centering
    \includegraphics[width=0.5\linewidth]{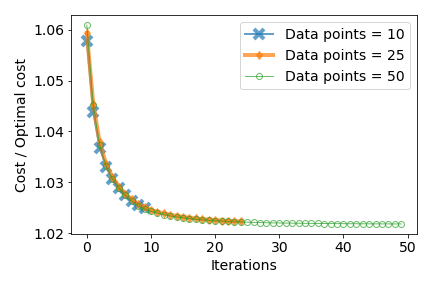}
    \caption{Convergence dynamics for different values of $N$}
    \label{fig:convergence_varyingN}
\end{figure}

In the next set of experiments, we study the effect of different privacy parameters on total travel time. To this end, we compare the costs of the pre-noise and post-noise solutions $x_N$ $x_{\text{alg}}$ from Algorithm~\ref{alg:network_dpsgd}. We conduct this comparison for $\epsilon \in \{0.01, 0.1, 0.5 \} $ and $\delta \in \{ 0.1, 0.5\}$. Table \ref{tab:privacy_costs} presents the percentage increase in total travel time due to the addition of privacy noise. 
The results in indicate that the price of privacy, i.e., the increase in total travel time due to the introduction of differential privacy noise is less than $7.8\times 10^{-2}$\% in the worst case. In fact, for more commonly used privacy parameters of $\epsilon=0.1$ and $\delta=0.1$, the cost of privacy is even smaller. One reason for this low cost of privacy is the high demand in the traffic network. From Figure \ref{fig:mean_demand}, we observe that every OD pair typically has a few hundred trips. With over 500 OD pairs, it is thus clear that there are tens of thousands of vehicles in the network contributing to trip information with every data point. Thus, with such a large number of vehicles, the noise required to protect the identity of one vehicle is not too high.

\begin{table}[]
    \centering
    \begin{tabular}{|c|c|c|}
    \hline
        $\epsilon$ &   $\delta$   & Cost (\% increase)\\ \hline
        0.01 & 0.1 & $7.83 \times 10^{-2}$ \\
        0.01 & 0.5 & $3.97 \times 10^{-3}$ \\
        0.1 & 0.1 & $9.06 \times 10^{-3}$ \\
        0.1 & 0.5 & $5.96 \times 10^{-3}$ \\
        0.5 & 0.1 & $2.44 \times 10^{-3}$ \\
        0.5 & 0.5 & $2.05 \times 10^{-3}$ \\ \hline
    \end{tabular}
    \caption{Change in routing costs due to incorporating privacy.}
    \label{tab:privacy_costs}
\end{table}

\subsection{Sensitivity}\label{ssec:sensitivity}

\begin{figure}
     \centering
     \begin{subfigure}[c]{0.32\textwidth}
         \centering
         \includegraphics[ width=\textwidth]{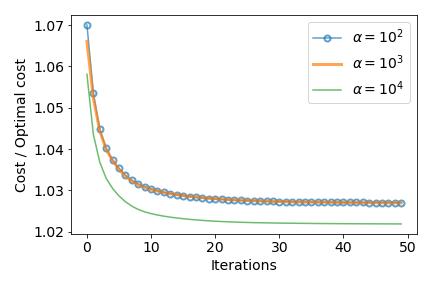}
         \caption{$\alpha$}
         \label{fig:sensitivity_alpha}
     \end{subfigure}
     \hfill
     \begin{subfigure}[c]{0.32\textwidth}
         \centering
         \includegraphics[width=\textwidth]{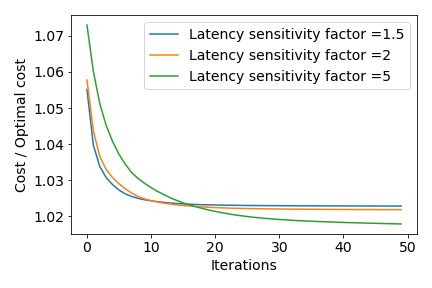}
         \caption{Slope of latency function}
         \label{fig:sensitivity_latency}
     \end{subfigure}
     \hfill
     \begin{subfigure}[c]{0.32\textwidth}
         \centering
         \includegraphics[width=\textwidth]{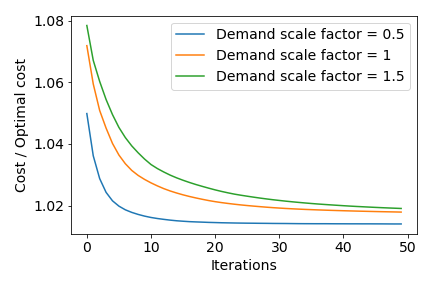}
         \caption{Traffic demand}
         \label{fig:sensitivity_demand}
     \end{subfigure}
        \caption{Sensitivity of the performance of Algorithm 1}
        \label{fig:sensitivity}
\end{figure}

We now study the sensitivity of our algorithm to input parameters. For these experiments, we fix the number of data points to $N=50$, since prior results suggest that most of the cost benefits are obtained with 50 data points. First, we study the effect of the regularizer term by setting $\alpha \in \{ 10^2, 10^3, 10^4\}$ and plot the convergence of the normalized costs in Figure \ref{fig:sensitivity_alpha}. We see that for larger values of $\alpha$, the costs decrease faster. This makes sense because larger $\alpha$ results in a lower condition number $\frac{\beta}{\alpha}$, which leads to faster convergence. We also note that the cost ratio does not go to 1 because Algorithm \ref{alg:network_dpsgd} is minimizing a regularized objective, while Baseline has no regularization.

In Figure \ref{fig:sensitivity_latency}, we compare three scenarios with varying slope for the latency function $q_e$. Recall that our previous experiments set the slope $q_e$ on each edge such that the travel time on the link doubles when the traffic is equal to the link capacity. This setting corresponds to a sensitivity factor of 2. We consider two more cases where where the travel time at capacity flow is 1.5 times and 5 times the free flow latency. Note that changing the latency sensitivity factor changes the matrix $Q$. Thus, for each of these experiments, we recompute the value of $\beta$ and $C$ and set it to be equal to the largest eigenvalue of the appropriate $B_\Lambda^\top Q B_\Lambda$. The optimal cost also varies for all the three cases and is recomputed. The value of $\alpha$ is fixed at $10^4$. We observe that when the latency function is steeper, i.e., the sensitivity factor is higher, the algorithm takes more iterations to reduce the costs, but eventually ends up with the lowest costs. This is because a larger $\beta$ leads to larger condition number $\frac{\beta}{\alpha}$, which makes convergence slow. However, a larger $\beta$ means that the $\ell_2$ regularizer is a smaller proportion of the total cost, meaning that the objectives of Algorithm~\ref{alg:network_dpsgd} and Baseline become more similar, which is what we believe causes the cost ratio to improve as the latency function becomes steeper.

Finally, we present the sensitivity of our algorithm to varying traffic demand in Figure \ref{fig:sensitivity_demand}. Elaborating further, in these experiments, we compare the nominal setting, where the demand is the mean demand with a scale factor of 1 to two cases. In the first case, we use a lower demand, where the mean traffic is 0.5 times the nominal traffic, and in the second case, the mean traffic is 1.5 times the nominal traffic. Again, as the demand changes, the matrix $B$ changes, and we recompute $\beta$ and $C$ as before. We observe that for the same value of $\alpha$, higher demand leads to better convergence and lower costs. This is because higher demand increases the travel time, making the $\ell_2$ regularizer a smaller proportion of Algorithm~\ref{alg:network_dpsgd}'s objective. The objective functions becoming more similar leads to the cost ratio being closer to $1$.


\section{Conclusions}
In this paper, we study the problem of learning network routing policies from sensitive user data. In particular, we consider the setting of a transportation network, where we want to learn and share a routing policy such that it does not reveal too much information about individual trips that may have contributed to learning this policy. Our paper presented a new approach to learn privacy-preserving routing policies by solving a reformulated network flow problem using a differentially private variant of the stochastic gradient descent algorithm. We prove that our algorithm is asymptotically optimal, meaning that the cost of the routing policy produced by our algorithm converges to the optimal non-private cost as the number of data points goes to infinity. Finally, our simulations on a Sioux Falls road network suggests that for realistic travel demands, we can learn differentially private routing policies that result in only a 2\% suboptimality in terms of total travel time. 

There are several interesting directions for future work. First, because differentially private algorithms are not allowed to be sensitive to single data points, they are naturally robust, and can be useful for tracking non-stationary demand distributions, as opposed to the stationary demand models we studied in this paper. In this paper, we studied request-level differential privacy where the goal is to occlude the influence of a single trip on the algorithm's output. Another practical and important notion is user-level differential privacy where the goal is to occlude the influence of all trips belonging to the same person on the algorithm's output. User-level privacy is harder to achieve, but is important in practice. Finally, generalizing the results to non-smooth objective functions would expand the domain of models that this technique can be applied to. 

\newpage 

\bibliographystyle{alpha}
\bibliography{peripherals/main}

\newpage 
\appendix
\section{Notation}\label{app:notation}

The terms and concepts defined and used throughout the paper are all described here for the sake of convenience. \\

\centerline{
\begin{tabular}{|c|p{0.85\textwidth}|}
    \hline
    Notation & Definition \\
    \hline
    $G$ & Graph representation of the network. (Definition~\ref{def:graph_representation}) \\
    \hline
    $V$ & The vertex set of the network $G$. (Definition~\ref{def:graph_representation}) \\
    \hline
    $E$ & The edge set of the network $G$. (Definition~\ref{def:graph_representation}) \\
    \hline 
    $n$ & The number of vertices in $G$, i.e., $n = \abs{V}$. (Definition~\ref{def:graph_representation}) \\
    \hline
    $m$ & The number of edges in $G$, i.e., $m = \abs{E}$. (Definition~\ref{def:graph_representation}) \\
    \hline
    $\cT$ & Operation period $\cT = \bigbra{t_{\text{start}}, t_{\text{end}}}$ which represents the time of day that the network operator is trying to optimize its decisions in. (Definition~\ref{def:operation-period}) \\
    \hline 
    $T$ & Number of minutes in the operation period. (Definition~\ref{def:operation-period}) \\
    \hline 
    $\Lambda$ & Demand matrix. $\Lambda(o,d)$ specifies the rate at which requests from $o \in V$ to $d \in V$ appear in the network. (Definition ~\ref{def:demand_matrix}) \\
    \hline
    $\cQ$ & The distribution of $\Lambda$. (Defined in Section~\ref{sec:model:demand_model}) \\
    \hline 
    $L$ & Dataset $L := (\Lambda_1,...,\Lambda_N)$ containing $N$ days of historical request data for the operation period $\cT$. (Defined in Section~\ref{sec:model:demand_model}) \\
    \hline 
    $x^{(o,d)}$ & Unit $(o,d)$ flow. A flow that routes exactly 1 unit of flow from $o$ to $d$ through the network. (Definition~\ref{def:unit_od_flow}) \\
    \hline 
    $x$ & Unit network flow. Specifies a unit $(o,d)$ flow for all pairs of vertices $(o,d)$ in the network. (Definition~\ref{def:unit_network_flow}) \\
    \hline 
    $\cX$ & The set of all unit network flows. It is also the feasible set for optimization problems~\eqref{eqn:opt:deterministic:omniscient},\eqref{eqn:opt:random:omniscient} and \eqref{eqn:opt:random:datadriven}. (Definition~\ref{def:unit_network_flow}) \\
    \hline     
    $x_k$ & A unit network flow and the $k$th gradient descent iterate from Algorithm~\ref{alg:network_dpsgd}. $x_k(L)$ is used to show explicit dependence of $x_k$ on the historical training data $L$. (Algorithm~\ref{alg:network_dpsgd}) \\
    \hline 
    $f_e$ & Link latency function for the edge $e \in E$. If $x$ is the total flow on $e$, then $f(x)$ will be the average travel time through the edge. (Definition~\ref{def:delay_functions}) \\
    \hline 
    $F$ & Total travel time function. $F(x,\Lambda)$ is the total travel time of requests in $\Lambda$ experience when they are routed according to $x$. (Defined in~\eqref{eqn:network_flow_to_time}) \\
    \hline 
    $(\epsilon, \delta)$ & Differential privacy parameters. As $\epsilon$ and $\delta$ get smaller, the privacy guarantees offered by differential privacy get stronger. (Definition~\ref{def:diffpriv}) \\
    \hline 
    $K$ & Upper bound on the standard deviation of the gradient of $F$. (Assumption~\ref{assump:bounded-var-grads}) \\
    \hline 
    $H$ & Hessian of $F$ with respect to $x$. Concretely, $H(x,\Lambda)$ is the hessian of $F$ with respect to $x$ evaluated at $(x,\Lambda)$. \\
    \hline 
    $\alpha$ & Strong convexity parameter. (Assumption~\ref{assump:strong_convex}) \\
    \hline 
    $\beta$ & Smoothness parameter. (Assumption~\ref{assump:smoothness} \\
    \hline 
    $C$ & Upper bound on $\norm{\cD_\Lambda \bigbra{\nabla F} (x,\Lambda)}_2$. (Assumption~\ref{assump:bounded_double_derivative}) \\
    \hline 
\end{tabular}
}

\newpage 
\section{Privacy Analysis (Proof of Theorem ~\ref{thm:dpsgd_privacy})}\label{pf:thm:dpsgd_privacy}


Recall the privacy guarantee for the Gaussian Mechanism from Theorem~\ref{thm:dwork_roth_14}. To show that Algorithm~\ref{alg:network_dpsgd} is $(\epsilon,\delta)$-differentially private, it suffices to show the following Lemma:

\begin{lemma}\label{lem:sgd_derivative_bound}
For every $1 \leq t \leq N$, any value of $ L := (\Lambda_1,...,\Lambda_N)$, and any $(o,d) \in V \times V$ we have $\norm{\cD_{\Lambda_t(o,d)} [x_N] (L) }_{op} \leq \min \bigpar{ \frac{C \min(1,2\alpha)}{\beta} , \frac{C}{\alpha N}}$.
\end{lemma}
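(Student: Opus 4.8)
The plan is to bound the operator norm of the derivative matrix $\cD_{\Lambda_t(o,d)}[x_N](L)$ by tracking how a perturbation in the $t$-th data point $\Lambda_t$ propagates through the gradient-descent iterates $x_t, x_{t+1}, \dots, x_N$. The starting observation is that $x_0, \dots, x_{t-1}$ do not depend on $\Lambda_t$ at all (each iterate $x_k$ is a function only of $\Lambda_1, \dots, \Lambda_k$), so $\cD_{\Lambda_t(o,d)}[x_{t-1}](L) = 0$. Applying the chain rule to the update $x_N = (\text{composition of updates from step } t \text{ onward})(x_{t-1}, \Lambda_t)$, I would write
\begin{align*}
    \cD_{\Lambda_t(o,d)}[x_N](L) = \Big( \prod_{k=t+1}^{N} \cD_{x_{k-1}}[x_k] \Big) \cdot \cD_{\Lambda_t(o,d)}[x_t],
\end{align*}
where each factor is evaluated at the appropriate iterate. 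The two ingredients I need are (i) a bound on the ``base'' term $\cD_{\Lambda_t(o,d)}[x_t]$ and (ii) a contraction bound on each Jacobian factor $\cD_{x_{k-1}}[x_k]$.

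For ingredient (i): since $x_t = \Pi_\cX(x_{t-1} + \eta_{t-1}\nabla_x F(x_{t-1},\Lambda_t))$ and $x_{t-1}$ is independent of $\Lambda_t$, differentiating in $\Lambda_t(o,d)$ gives $\cD_{\Lambda_t(o,d)}[x_t] = \cD\Pi_\cX \cdot \eta_{t-1} \cD_{\Lambda_t(o,d)}[\nabla_x F](x_{t-1},\Lambda_t)$. The projection onto a convex set is $1$-Lipschitz so its Jacobian has operator norm at most $1$, and Assumption~\ref{assump:bounded_double_derivative} gives $\norm{\cD_{\Lambda(o,d)}[\nabla_x F]}_{op} \le C$. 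Hence $\norm{\cD_{\Lambda_t(o,d)}[x_t]}_{op} \le \eta_{t-1} C$. For ingredient (ii): the update map $x \mapsto \Pi_\cX(x + \eta \nabla_x F(x,\Lambda))$ has Jacobian $\cD\Pi_\cX \cdot (I + \eta H(x,\Lambda))$; again using that $\cD\Pi_\cX$ has operator norm $\le 1$, it suffices to bound $\norm{I + \eta H}_{op}$. Wait --- note the sign: Algorithm~\ref{alg:network_dpsgd} does $x_k \leftarrow \Pi_\cX(x_{k-1} + \eta_{k-1}\nabla_x F)$, so I should be careful about whether $F$ is being maximized or the convention is $x - \eta\nabla$; in either case, using $\alpha I \preceq H \preceq \beta I$ and the stepsize choice $\eta_{k-1} = \min(\tfrac{1}{\alpha k}, \tfrac{\min(1,2\alpha)}{\beta})$, the eigenvalues of $I - \eta H$ lie in $[1 - \eta\beta, 1 - \eta\alpha]$, and $\eta \le \min(1,2\alpha)/\beta$ forces $1 - \eta\beta \ge 1 - \min(1,2\alpha) \ge -1$ while also $1-\eta\beta \ge 1 - 2\alpha/\beta \cdot \beta/2 \cdot \dots$; the upshot is $\norm{I - \eta_{k-1} H(x_{k-1},\Lambda_k)}_{op} \le 1 - \eta_{k-1}\alpha$. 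This gives the telescoping product bound $\norm{\cD_{\Lambda_t(o,d)}[x_N](L)}_{op} \le \eta_{t-1} C \prod_{k=t+1}^{N}(1 - \eta_{k-1}\alpha)$.

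Finally I would convert this product bound into the two claimed quantities. For the first bound $\frac{C\min(1,2\alpha)}{\beta}$: every factor $1 - \eta_{k-1}\alpha \le 1$ and $\eta_{t-1} \le \frac{\min(1,2\alpha)}{\beta}$, so the product is at most $\frac{C\min(1,2\alpha)}{\beta}$ immediately. For the second bound $\frac{C}{\alpha N}$: I would use the tail of the harmonic-type stepsizes, $\eta_{k-1} \ge \frac{1}{\alpha k} \cdot (\text{something})$ — more precisely, using $1 - \eta_{k-1}\alpha \le 1 - \frac{1}{k} \cdot (\text{when } \eta_{k-1} = \tfrac{1}{\alpha k})$ wait, this needs care because $\eta_{k-1}$ is the minimum of two terms. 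The cleanest route: for $k$ large enough that $\frac{1}{\alpha k} \le \frac{\min(1,2\alpha)}{\beta}$ we have $\eta_{k-1}\alpha = 1/k$, so $\prod (1 - 1/k)$ telescopes to roughly $\frac{t}{N}$ type decay, and combined with $\eta_{t-1} \le \frac{1}{\alpha t}$ on the base term yields $\eta_{t-1} C \prod_{k=t+1}^N (1-1/k) \le \frac{C}{\alpha t} \cdot \frac{t}{N} = \frac{C}{\alpha N}$; the early iterates where the stepsize is capped only make the product smaller, so this survives. I anticipate the main obstacle to be this last telescoping/case-analysis step: one must handle the regime where the stepsize cap $\frac{\min(1,2\alpha)}{\beta}$ is active (small $k$) versus where $\frac{1}{\alpha k}$ is active (large $k$) cleanly enough to get exactly $\frac{C}{\alpha N}$ rather than a looser constant, and to make sure the inequality $\norm{I - \eta H}_{op} \le 1 - \eta\alpha$ actually holds (i.e., that $\eta\beta \le 2$, which is exactly why the $\min(1,2\alpha)/\beta$ cap and hence the $2\alpha$ appear) — getting the constant in the cap to interact correctly with strong convexity is the delicate bookkeeping.
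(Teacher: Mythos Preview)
Your approach is the paper's: chain rule factors $\cD_{\Lambda_t(o,d)}[x_N]$ into a base term bounded by $C\eta_{t-1}$ (via Assumption~\ref{assump:bounded_double_derivative}) times a product of Jacobians each bounded by $1-\eta_{k-1}\alpha$ (via $\alpha I\preceq H\preceq\beta I$), and then telescopes. Your direct product $\prod_{k=t+1}^N(1-\tfrac{1}{k})=\tfrac{t}{N}$ is in fact a bit cleaner than the paper's detour through $1-x\le e^{-x}$ and an integral bound, and you are more careful than the paper about the projection (the paper's written proof silently drops $\Pi_\cX$).

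There is one genuine slip, precisely at the spot you flagged. Your claim that ``the early iterates where the stepsize is capped only make the product smaller'' is backwards. When the cap is active (i.e., $k\le t_0:=\tfrac{\beta}{\alpha\min(1,2\alpha)}$) you have $\eta_{k-1}=\tfrac{\min(1,2\alpha)}{\beta}\le\tfrac{1}{\alpha k}$, hence $1-\eta_{k-1}\alpha\ge 1-\tfrac{1}{k}$: the capped factors are \emph{larger} than the $(1-\tfrac1k)$ factors you used as a benchmark, so the product is larger, not smaller, and your inequality does not go through for $t<t_0$. The correct fix (which is what the paper does) is to treat $t<t_0$ as its own case: simply drop all factors with $t+1\le k\le t_0$ (each is $\le 1$), so the remaining product telescopes to $\le\tfrac{t_0}{N}$, and pair this with the capped base $\eta_{t-1}=\tfrac{\min(1,2\alpha)}{\beta}$ to get $C\cdot\tfrac{\min(1,2\alpha)}{\beta}\cdot\tfrac{t_0}{N}=\tfrac{C}{\alpha N}$. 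A minor side remark: since $\eta\le\tfrac{\min(1,2\alpha)}{\beta}\le\tfrac{1}{\beta}$ already forces $\eta\beta\le 1$, the eigenvalues of $I-\eta H$ are all nonnegative and the contraction $\|I-\eta H\|_{op}=1-\eta\alpha$ is immediate; the $2\alpha$ in the cap is not what secures this step.
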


Lemma~\ref{lem:sgd_derivative_bound} is sufficient to prove privacy, because it implies that the $\ell_2$ sensitivity of $x_N$ is at most $\frac{L}{\alpha N}$. To see why this is true, let $L_1 = (\Lambda_1,...,\Lambda'_t, ..., \Lambda_N)$ and $L_2 = (\Lambda_1,...,\Lambda''_t, ..., \Lambda_N)$ be request-level-adjacent data sets that differ only on the $t$th day. Furthermore, let $(o',d')$ be the request for which $\Lambda_t'$ and $\Lambda_t''$ differ. Let $x_N(L_1)$ and $x_N(L_2)$ be the final iterates of gradient descent obtained by using the data sets $L_1$ and $L_2$ respectively. By the fundamental theorem of calculus,
\begin{align*}
    x_N(L_2) - x_N(L_1) &= \int_{L_1}^{L_2} \cD_{\Lambda_t}[x_N](L) \; dL \\
    &= \int_{\Lambda'_t}^{\Lambda''_t} \cD_{\Lambda_t}[x_N](\Lambda_1,...,\Lambda_{t-1}, \Lambda, \Lambda_{t+1},...,\Lambda_N) \; d\Lambda \\
    \implies \norm{x_N^{(L')} - x_N^{(L)}}_2 &= \norm{ \int_{\Lambda'_t}^{\Lambda''_t} \cD_{\Lambda_t}[x_N](\Lambda_1,...,\Lambda_{t-1}, \Lambda, \Lambda_{t+1},...,\Lambda_N) \; d\Lambda }_2 \\
    &\leq \int_{\Lambda'_t}^{\Lambda''_t} \norm{ \cD_{\Lambda_t}[x_N](\Lambda_1,...,\Lambda_{t-1}, \Lambda, \Lambda_{t+1},...,\Lambda_N) \; d\Lambda }_2 \\
    &=\int_{\Lambda'_t(o',d')}^{\Lambda''_t(o',d')} \norm{ \cD_{\Lambda_t(o',d')}[x_N](\Lambda_1,...,\Lambda_{t-1}, \Lambda, \Lambda_{t+1},...,\Lambda_N) \; d\Lambda(o',d') }_2 \\
    &\leq \int_{\Lambda'_t(o',d')}^{\Lambda''_t(o',d')} \norm{ \cD_{\Lambda_t(o',d')}[x_N](\Lambda_1,...,\Lambda_{t-1}, \Lambda, \Lambda_{t+1},...,\Lambda_N) }_2 d\Lambda (o',d') \\
    &\overset{(a)}{\leq} \int_{\Lambda'_t}^{\Lambda''_t} \min \bigpar{ \frac{C \min(1,2\alpha)}{\beta} , \frac{C}{\alpha N}} d\Lambda (o',d') \\
    &\leq \min \bigpar{ \frac{C \min(1,2\alpha)}{\beta} , \frac{C}{\alpha N}} \abs{ \Lambda''_t(o',d') - \Lambda'_t(o',d') } \\
    &\overset{(b)}{\leq} \min \bigpar{ \frac{C \min(1,2\alpha)}{\beta T } , \frac{C}{\alpha T N}}.
\end{align*}
Here $(a)$ due to Lemma~\ref{lem:sgd_derivative_bound}. $(b)$ is because $\abs{ \Lambda''_t(o,d) - \Lambda'_t(o,d) } \leq T^{-1}$ is a consequence of $L_1,L_2$ being request-level-adjacent. 

Thus to establish $(\epsilon,\delta)$-differential privacy of Algorithm~\ref{alg:network_dpsgd}, all that remains is to prove Lemma~\ref{lem:sgd_derivative_bound}. 

\begin{proof}[Proof of Lemma~\ref{lem:sgd_derivative_bound}]

Define $L := (\Lambda_1,...,\Lambda_N)$. By Assumptions \ref{assump:twice_diff} , \ref{assump:strong_convex} and \ref{assump:smoothness}, the functions $F(\cdot, \Lambda_1), ..., F(\cdot, \Lambda_N)$ are all twice differentiable, $\alpha$-strongly convex and $\beta$-smooth. For each $t \leq N$, and any $(o,d) \in V \times V$ by chain rule we have

\begin{align*}
	\cD_{\Lambda_t(o,d)}[x_N](L) = \bigpar{ \prod_{k = t+1}^{N-1} \cD_{x_k}[x_{k+1}](L) } \cD_{\Lambda_t(o,d)}[x_{t+1}](L).
\end{align*}
Since $x_{k+1} = x_k - \eta_k \nabla_x F(x_k, \Lambda_k)$, differentiating both sides with respect to $x_k$ gives
\begin{align*}
	\cD_{x_k}[x_{k+1}](L) &= I - \eta_k H(x_k,\Lambda_k)
\end{align*}
where $H(\cdot, \Lambda_k)$ is the Hessian of $F(\cdot, \Lambda_k)$. Since $F(\cdot, \Lambda_k)$ is $\alpha$-strongly convex, we know $H(x_k, \Lambda_k) \succeq \alpha I$. By $\beta$-smoothness of $F(\cdot,\Lambda_k)$, we also know that $H(x_k, \Lambda_k) \preceq \beta I$. Therefore if $\eta_k \leq \frac{1}{\beta}$, we see that $\norm{\cD_{x_k}[x_{k+1}](L)}_{op} \leq 1-\eta_k \alpha$. From this we can conclude that

\begin{align}
\norm{\cD_{\Lambda_t(o,d)}[x_N](L)}_{op} &\leq \norm{\cD_{\Lambda_t(o,d)}[x_{t+1}](L)}_{op} \prod_{k = t+1}^{N-1} \norm{\cD_{x_k}[x_{k+1}](L)}_{op} \nonumber \\
&\leq \norm{\cD_{\Lambda_t(o,d)}[x_{t+1}](L)}_{op} \prod_{k = t+1}^{N-1} (1 - \eta_k \alpha) \nonumber \\
&\leq \norm{\cD_{\Lambda_t(o,d)}[x_{t+1}](L)}_{op} \prod_{k = t+1}^{N-1} \exp \bigpar{ - \eta_k \alpha} \nonumber \\
&= \norm{\cD_{\Lambda_t(o,d)}[x_{t+1}](L)}_{op} \exp \bigpar{ - \sum_{k = t+1}^{N-1} \eta_k \alpha} \label{eqn:chain_rule_contraction}
\end{align}

Finally, note that $x_k$ only depends on $x_0$ and $\Lambda_1,...,\Lambda_{k-1}$, and in particular it does not depend on $\Lambda_k$. Therefore differentiating both sides of $x_{k+1} = x_k - \eta_k \nabla_x F(x_k, \Lambda_k)$ with respect to $\Lambda_k(o,d)$ gives 
\begin{align}
    \cD_{\Lambda_t(o,d)}[x_{t+1}](L) &= -\eta_t \cD_{\Lambda_t(o,d)} \bigbra{ \nabla_x F} (x_t, \Lambda_t)  \nonumber \\
    \implies \norm{ \cD_{\Lambda_t(o,d)}[x_{t+1}] (L)}_{op} &= \eta_t \norm{ \cD_{\Lambda_t(o,d)} \bigbra{ \nabla_x F} (x_t, \Lambda_t)  }_{op} \overset{(a)}{\leq} C \eta_t, \label{eqn:initial_derivative}
\end{align}
where $(a)$ is due to Assumption~\ref{assump:bounded_double_derivative}. Combining inequalities \eqref{eqn:chain_rule_contraction} and \eqref{eqn:initial_derivative} gives
\begin{align}\label{eqn:final_iterate_derivative}
    \norm{\cD_{\Lambda_t(o,d)}[x_N] (L)}_{op} &\leq C \eta_t \exp \bigpar{ - \sum_{k = t+1}^{N-1} \eta_k \alpha}.
\end{align}

\noindent Letting $t_0 := \max \bigpar{ \frac{\beta}{\alpha \min(1,2\alpha)} - 1,0}$ note that
\begin{align*}
    \eta_t = \casewise{ 
        \begin{tabular}{cc}
            $\frac{1}{\alpha (t+1)}$ & if $t \geq t_0$ \\
            $\frac{\min(1,2\alpha)}{\beta}$ & otherwise. 
        \end{tabular}
    }
\end{align*}
If $t \geq t_0$, then $\eta_k = \frac{1}{\alpha(k+1)}$ for all $k \geq t$, and we see that
\begin{align*}
    \norm{\cD_{\Lambda_t(o,d)}[x_N] (L)}_{op} &\leq C \eta_t \exp \bigpar{ - \sum_{k = t+1}^{N-1} \eta_k \alpha} \\
    &= \frac{C}{\alpha (t+1)} \exp \bigpar{ - \alpha \sum_{k=t+1}^{N-1} \frac{1}{\alpha (k+1)} } \\
    &\leq \frac{C}{\alpha (t+1)} \exp \bigpar{ - \alpha \int_{t+1}^{N} \frac{1}{ \alpha y} dy } \\
    &= \frac{C}{\alpha (t+1)} \exp \bigpar{ - \ln N + \ln(t+1) } \\
    &= \frac{C}{\alpha (t+1)} \frac{t+1}{N} = \frac{C}{\alpha N}
\end{align*}
On the other hand, if $t \leq t_0$, then $\eta_t = \frac{\min(1,2\alpha)}{\beta}$ and we see that
\begin{align*}
    \norm{\cD_{\Lambda_t(o,d)}[x_N] (L)}_{op} &\leq C \eta_t \exp \bigpar{ - \sum_{k = t+1}^{N-1} \eta_k \alpha} \\
    &= \frac{C \min(1,2\alpha)}{\beta} \exp \bigpar{ - \alpha \sum_{k=t+1}^{N-1} \frac{1}{\alpha (k+1)} } \\
    &\leq \frac{C \min(1,2\alpha)}{\beta} \exp \bigpar{ - \alpha \sum_{k=\bigfloor{t_0}+1}^{N-1} \frac{1}{\alpha (k+1)} } \\
    &\leq \frac{C \min(1,2\alpha)}{\beta} \exp \bigpar{ - \alpha \int_{t_0+1}^{N} \frac{1}{\alpha y} dy} \\
    &\leq \frac{C \min(1,2\alpha)}{\beta} \frac{t_0 + 1}{N} \\
    &= \frac{C \min(1,2\alpha)}{\beta} \frac{\beta}{ \alpha \min(1,2\alpha) N} = \frac{C}{\alpha N}.
\end{align*}
Thus in either case we have $\norm{\cD_{\Lambda_t(o,d)}[x_N] (L)}_{op} \leq \frac{C}{\alpha N}$. \\

\noindent Finally, note that \eqref{eqn:final_iterate_derivative} implies $\norm{\cD_{\Lambda_t(o,d)}[x_N] (L)}_{op} \leq C \eta_t \leq C \frac{\min(1,2\alpha)}{\beta}$. Combining these two bounds gives the desired result:
\begin{align*}
    \norm{\cD_{\Lambda_t}[x_N] (L)}_{op} \leq \min \bigpar{ \frac{C \min(1,2\alpha)}{\beta}, \frac{C}{\alpha N} }. 
\end{align*}

\end{proof}

\newpage 
\section{Performance Analysis (Proof of Theorem~\ref{thm:dpsgd_accuracy})}\label{pf:thm:dpsgd_accuracy}

Let $x^*$ be a solution to \eqref{eqn:opt:random:omniscient}. Note that

\begin{align*}
	&\norm{x_{k+1} - x^*}_2^2 \\
	&= \norm{\prod_{\cX} \bigpar{x_k - \eta_k \nabla F (x_k, \Lambda_k)} - x^*}_2^2 \\
	&\overset{(a)}{=} \norm{\prod_{\cX} \bigpar{x_k - \eta_k \nabla F (x_k, \Lambda_k)} - \prod_{\cX} \bigpar{x^* - \eta_k \nabla \mathbb{E}_{\Lambda \sim \cQ} \bigbra{F(x^*, \Lambda)} }}_2^2 \\
	&\overset{(b)}{\leq} \norm{x_k - \eta_k \nabla F (x_k, \Lambda_k) - \bigpar{x^* - \eta_k \nabla \mathbb{E}_{\Lambda \sim \cQ} \bigbra{F(x^*,\Lambda)} }}_2^2 \\
	&= \norm{x_k - x^* - \eta_k \bigpar{\nabla F_k (x_k, \Lambda_k) - \nabla \mathbb{E}_{\Lambda \sim \cQ} \bigbra{F(x^*, \Lambda)} }}_2^2 \\
	&= \norm{x_k - x^*}_2^2 + \eta_k^2 \norm{\nabla F (x_k, \Lambda_k) - \nabla \mathbb{E}_{\Lambda\sim\cQ} F(x^*,\Lambda)}_2^2 - 2 \eta_k \bigpar{\nabla F (x_k, \Lambda_k) - \nabla \mathbb{E}_{\Lambda\sim\cQ} F(x^*,\Lambda)}^\top(x_k - x^*)
\end{align*}
Taking expectation of both sides conditioned on $x_k$, we see that

\begin{align*}
&\mathbb{E} \bigbra{ \evaluate{ \norm{x_{k+1} - x^*}_2^2 } x_k } \\
&\leq \mathbb{E} \bigbra{  \evaluate{\norm{x_k - x^*}_2^2 + \underbrace{\eta_k^2 \norm{\nabla F (x_k, \Lambda_k) - \nabla \mathbb{E}_{\Lambda\sim\cQ} F(x^*,\Lambda)}_2^2}_{\text{Term 2}} \underbrace{ - 2 \eta_k \bigpar{\nabla F (x_k, \Lambda_k) - \nabla \mathbb{E}_{\Lambda\sim\cQ} F(x^*,\Lambda)}^\top(x_k - x^*)}_{\text{Term 2}}} x_k } 
\end{align*}

To show that $x_{k+1}$ is closer to $x^*$ than $x_k$ is, we will provide bounds for both Term 2 and Term 3. 

\subsection{Bounding Term 2}

To upper bound Term 2, noting that $\mathbb{E}_{\Lambda_k}[\nabla f_k(x)] = \nabla f(x)$ for all $x \in \cX$, we have

\begin{align}
&\eta_k^2 \mathbb{E} \bigbra{ \evaluate{\norm{\nabla F (x_k, \Lambda_k) - \nabla \mathbb{E}_{\Lambda \sim \cQ} \bigbra{F(x^*, \Lambda)} }_2^2} x_k } \nonumber \\
&= \eta_k^2 \mathbb{E} \bigbra{ \evaluate{\norm{ \underbrace{\nabla F(x_k, \Lambda_k) - \nabla \mathbb{E}_{\Lambda \sim \cQ} \bigbra{F(x_k, \Lambda)}}_{A} + \underbrace{\nabla \mathbb{E}_{\Lambda \sim \cQ} \bigbra{F(x_k, \Lambda)} - \nabla \mathbb{E}_{\Lambda \sim \cQ} \bigbra{F(x^*, \Lambda)}}_{B}}_2^2} x_k } \nonumber \\
&= \eta_k^2 \mathbb{E} \bigbra{ \evaluate{ \norm{A}_2^2 + \norm{B}_{2}^2 + 2 A^\top B } x_k } \label{eqn:diffpriv_term2_bound_intermediate}
\end{align}
By Assumption~\ref{assump:bounded_double_derivative} and the dominated convergence theorem, we have $\nabla \mathbb{E}_{\Lambda \sim \cQ} \bigbra{ F(x_k, \Lambda) } = \mathbb{E}_{\Lambda \sim \cQ} \bigbra{ \nabla F(x_k, \Lambda) }$. Hence we see that 
\begin{align*}
    \mathbb{E} \bigbra{ \evaluate{ A } x_k } = \mathbb{E}_{\Lambda_k \sim \cQ} \bigbra{ \evaluate{ F(x_k, \Lambda_k) } x_k } - \mathbb{E}_{\Lambda \sim \cQ} \bigbra{ \evaluate{ F(x_k, \Lambda) } x_k } = 0.
\end{align*}
From this observation we have the three following remarks:
\begin{enumerate}
    \item Since $A$ is a zero mean random vector, $\mathbb{E} \bigbra{ \evaluate{ \norm{A}_2^2 } x_k}$ is the variance of $\nabla F(x_k, \Lambda)$ given $x_k$. By Assumption~\ref{assump:bounded-var-grads}, $\mathbb{E}_{\Lambda \sim \cQ} \bigbra{ \norm{\nabla F(x,\Lambda)}_2^2} \leq K^2$ for any $x$, which implies that $\mathbb{E} \bigbra{ \evaluate{ \norm{A}_2^2 } x_k} \leq K^2$. 
    \item By Assumption~\ref{assump:smoothness}, $F(\cdot, \Lambda)$ is $\beta$-smooth for every $\Lambda$. $\beta$-smoothness of $F(\cdot, \Lambda)$ implies that $\nabla F(\cdot,\Lambda)$ is $\beta$-lipschitz. Thus we have
    \begin{align*}
        \mathbb{E} \bigbra{ \evaluate{ \norm{B}_2^2 } x_k } &= \mathbb{E} \bigbra{ \evaluate{ \norm{ \mathbb{E}_{\Lambda \sim \cQ} \bigbra{ \nabla F(x_k, \Lambda) - \nabla F(x^*, \Lambda)} }_2^2 } x_k } \\
        &\overset{(a)}{\leq}  \mathbb{E} \bigbra{ \evaluate{ \mathbb{E}_{\Lambda \sim \cQ} \bigbra{ \norm{ \nabla F(x_k, \Lambda) - \nabla F(x^*, \Lambda)} }_2^2 } x_k } \\
        &\overset{(b)}{\leq} \mathbb{E} \bigbra{ \evaluate{ \mathbb{E}_{\Lambda \sim \cQ} \bigbra{ \beta^2 \norm{ x_k - x^* }}_2^2 } x_k } \\
        &= \beta^2 \norm{ x_k - x^* }_2^2.
    \end{align*}
    where $(a)$ is due to Jensen's inequality and $(b)$ is due to $\nabla F(\cdot, \Lambda)$ being $\beta$-Lipschitz. 
    \item Conditioned on $x_k$, $A$ is zero mean and $B$ is constant, meaning that $A^\top B$ is a zero mean random vector. Therefore $\mathbb{E} [ \evaluate{A^\top B} x_k] = 0$. 
\end{enumerate}

\noindent Applying these three remarks to the inequality \eqref{eqn:diffpriv_term2_bound_intermediate} we see that
\begin{align}
    \eta_k^2 \mathbb{E} \bigbra{ \evaluate{\norm{\nabla F (x_k, \Lambda_k) - \nabla \mathbb{E}_{\Lambda \sim \cQ} \bigbra{F(x^*, \Lambda)} }_2^2} x_k } &\leq \eta_k^2 \bigpar{ K^2 + \beta^2 \norm{x_k - x^*}_2^2 }. \label{eqn:diffpriv_term2_bound} 
\end{align}

\subsection{Bounding Term 3}

By linearity of expectation, Term 3 is equal to
\begin{align*}
&-2 \eta_k \mathbb{E} \bigbra{  \evaluate{ \bigpar{\nabla F (x_k, \Lambda_k) - \nabla \mathbb{E}_{\Lambda \sim \cQ} \bigbra{F(x^*,\Lambda)}}^\top(x_k - x^*)} x_k } \\
&= -2 \eta_k  \bigpar{\mathbb{E} \bigbra{  \evaluate{\nabla f (x_k, \Lambda_k)} x_k} - \nabla \mathbb{E}_{\Lambda \sim \cQ} \bigbra{F(x^*,\Lambda })}^\top(x_k - x^*) \\
&= -2 \eta_k  \bigpar{ \mathbb{E}_{\Lambda \sim \cQ} \bigbra{\nabla F(x_k,\Lambda) - \nabla F(x^*,\Lambda)}}^\top(x_k - x^*) .
\end{align*}
Define $f(x) := \mathbb{E}_{\Lambda \sim \cQ}[F(x,\Lambda)]$. We can re-write the above equation as
\begin{align*}
    -2 \eta_k \mathbb{E} \bigbra{  \evaluate{ \bigpar{\nabla F (x_k, \Lambda_k) - \nabla \mathbb{E}_{\Lambda \sim \cQ} \bigbra{F(x^*,\Lambda)}}^\top(x_k - x^*)} x_k } &\leq - 2\eta_k \bigpar{ \nabla f(x_k) - \nabla f(x^*) }^\top(x_k - x^*)
\end{align*}

Since $F(\cdot, \Lambda)$ is $\alpha$-strongly convex for every $\Lambda$, we can conclude that $f$ is also $\alpha$-strongly convex. To upper bound Term 3, we use $\alpha$-strong convexity of $f$ to conclude that
\begin{align*}
	f(x_k) &\geq f(x^*) + \nabla f(x^*)^\top (x_k - x^*) + \frac{\alpha}{2} \norm{x_k - x^*}_2^2 \\
	&\text{and} \\ 
	f(x^*) &\geq f(x_k) + \nabla f(x_k)^\top (x^* - x_k) + \frac{\alpha}{2} \norm{x_k - x^*}_2^2.
\end{align*}
Adding these inequalities together gives
\begin{align*}
f(x_k) + f(x^*) &\geq f(x_k) + f(x^*) - \bigpar{ \nabla f(x_k) - \nabla f(x^*) }^\top (x_k - x^*) + \alpha \norm{x_k - x^*}_2^2 \\
\implies -\alpha \norm{x_k - x^*}_2^2 &\geq - \bigpar{ \nabla f(x_k) - \nabla f(x^*) }^\top (x_k - x^*).
\end{align*}
This inequality implies the following bound on Term 3:
\begin{align}
-2 \eta_k \mathbb{E} \bigbra{  \evaluate{ \bigpar{\nabla F (x_k, \Lambda_k) - \nabla \mathbb{E}_{\Lambda \sim \cQ} \bigbra{F(x^*,\Lambda)}}^\top(x_k - x^*)} x_k } &\leq -2 \alpha \eta_k \norm{x_k - x^*}_2^2. \label{eqn:diffpriv_term3_bound}
\end{align}
\subsection{Putting everything together}
Applying the bounds \eqref{eqn:diffpriv_term2_bound} and \eqref{eqn:diffpriv_term3_bound} on Term 2 and Term 3 respectively, we see that

\begin{align}
\mathbb{E} \bigbra{ \evaluate{ \norm{x_{k+1} - x^*}_2^2 } x_k } &\leq \norm{x_k - x^*}_2^2 + \eta_k^2 K^2 + \beta^2 \eta_k^2 \norm{x_k - x^*}_2^2 - 2 \alpha \eta_k \norm{x_k - x^*}_2^2 \nonumber \\
&= \bigpar{ 1 - 2 \alpha \eta_k + \beta^2 \eta_k^2 } \norm{x_k - x^*}_2^2 + \eta_k^2 K^2. \label{eqn:affine_recursion}
\end{align}
By the tower property of expectation, we can write
\begin{align}
    \mathbb{E} \bigbra{ \norm{x_{N} - x^*}_2^2 } &= \mathbb{E} \bigbra{ ... \evaluate{ \mathbb{E} \bigbra{ \evaluate{ \mathbb{E} \bigbra{ \evaluate{ \norm{x_{N} - x^*}_2^2 } x_{N-1} } } x_{N-2} } ... } x_0 }. \label{eqn:sgd_error_tower_expect}
\end{align}
Combining the recursive relation from \eqref{eqn:affine_recursion} with \eqref{eqn:sgd_error_tower_expect} we see that
\begin{align*}
    \mathbb{E} \bigbra{ \norm{x_{N} - x^*}_2^2 } &\leq \norm{x_0 - x^*}_2^2 \bigpar{ \prod_{t=0}^{N-1} 1 - 2 \alpha \eta_t + \beta^2 \eta_t^2 } + K^2 \sum_{t=0}^{N-1} \eta_t^2 \bigpar{ \prod_{k=t+1}^{N-1} 1 - 2\alpha \eta_k + \beta^2 \eta_k^2 } \\
    &\leq \norm{x_0 - x^*}_2^2 \exp \bigpar{ \sum_{t=0}^{N-1} - 2 \alpha \eta_t + \beta^2 \eta_t^2 } + K^2 \sum_{t=0}^{N-1} \eta_t^2 \exp \bigpar{ \sum_{k=t+1}^{N-1} - 2\alpha \eta_k + \beta^2 \eta_k^2 }. 
\end{align*}
Since we chose $\eta_k := \min \bigpar{ \frac{1}{\alpha k}, \frac{\min(1,2\alpha)}{\beta} }$, we have $\eta_k \leq \frac{1}{\alpha k}$. This means that $\sum_{k=1}^\infty \eta_k^2 \leq \sum_{k=1}^\infty \frac{1}{\alpha^2 k^2} = \frac{\pi^2}{6 \alpha^2}$. Thus defining $C_{\alpha,\beta} := \exp \bigpar{ \frac{\beta^2 \pi^2}{6\alpha^2} }$, we have
\begin{align*}
    \mathbb{E} \bigbra{ \norm{x_{N} - x^*}_2^2 } &\leq C_{\alpha,\beta} \norm{x_0 - x^*}_2^2 \exp \bigpar{ \sum_{t=0}^{N-1} - 2 \alpha \eta_t } + K^2 C_{\alpha,\beta} \sum_{t=0}^{N-1} \eta_t^2 \exp \bigpar{ \sum_{k=t+1}^{N-1} - 2\alpha \eta_k } \\
    &\overset{(a)}{\leq} C_{\alpha,\beta} \norm{x_0 - x^*}_2^2 \exp \bigpar{ \sum_{t=0}^{N-1} - 2 \alpha \eta_t } + K^2 C_{\alpha,\beta} \sum_{t=0}^{N-1} \bigpar{ \frac{C}{\alpha N} }^2 \\
    &= C_{\alpha,\beta} \norm{x_0 - x^*}_2^2 \exp \bigpar{ \sum_{t=0}^{N-1} - 2 \alpha \eta_t } + \frac{ K^2 C^2 C_{\alpha,\beta}}{\alpha^2 N} 
\end{align*}
where $(a)$ is because in Appendix~\ref{pf:thm:dpsgd_privacy} we showed that $\eta_t \exp \bigpar{ \sum_{k=t+1}^{N-1} - \alpha \eta_k } \leq \frac{C}{\alpha N}$ for all $ 0 \leq t \leq N$. Next, let $t_0 := \max \bigpar{ \frac{\beta}{\alpha \min(1,2\alpha)} - 1, 0 }$ so that $\eta_t = \frac{1}{\alpha t}$ if $t \geq t_0$ and $\eta_t = \frac{\min(1,2\alpha)}{\beta}$ otherwise. We then have
\begin{align*}
    \mathbb{E} \bigbra{ \norm{x_{N} - x^*}_2^2 } &\leq C_{\alpha,\beta} \norm{x_0 - x^*}_2^2 \exp \bigpar{ \sum_{t=0}^{N-1} - 2 \alpha \eta_t } + \frac{ K^2 C^2 C_{\alpha,\beta}}{\alpha^2 N} \\
    &= C_{\alpha,\beta} \norm{x_0 - x^*}_2^2 \exp \bigpar{ \sum_{t=0}^{t_0} - 2 \alpha \eta_t } \exp \bigpar{ \sum_{t=t_0+1}^{N-1} - 2 \alpha \eta_t } + \frac{ K^2 C^2 C_{\alpha,\beta}}{\alpha^2 N} \\
    &\leq C_{\alpha,\beta} \norm{x_0 - x^*}_2^2 \exp \bigpar{ 2 \alpha \sum_{t=t_0+1}^{N-1} - \eta_t } + \frac{ K^2 C^2 C_{\alpha,\beta}}{\alpha^2 N} \\
    &\leq C_{\alpha,\beta} \norm{x_0 - x^*}_2^2 \exp \bigpar{ - 2 \alpha \int_{t_0+1}^{N} \frac{1}{\alpha y} dy } + \frac{ K^2 C^2 C_{\alpha,\beta}}{\alpha^2 N} \\
    &= C_{\alpha,\beta} \norm{x_0 - x^*}_2^2 \bigpar{ \frac{t_0+1}{N} }^2 + \frac{ K^2 C^2 C_{\alpha,\beta}}{\alpha^2 N} \\
    &= \frac{ C_{\alpha,\beta} \norm{x_0 - x^*}_2^2 \beta^2}{ \bigpar{ \alpha \min(1,2\alpha) }^2 N^2}  + \frac{ K^2 C^2 C_{\alpha,\beta}}{\alpha^2 N}.
\end{align*}
Finally, we have
\begin{align*}
    \mathbb{E} \bigbra{ \norm{x_{\text{alg}} - x^*}_2 } &= \mathbb{E} \bigbra{ \norm{ \Pi_\cX \bigpar{ x_{N} + Z} - x^*}_2 } \\
    &= \mathbb{E} \bigbra{ \norm{ \Pi_\cX \bigpar{ x_{N} + Z} - \Pi_\cX(x^*) }_2 } \\
    &\leq \mathbb{E} \bigbra{ \norm{ x_{N} + Z - x^*}_2 } \\
    &\leq \mathbb{E} \bigbra{ \norm{x_{N} - x^*}_2 } + \mathbb{E} \bigbra{ \norm{Z}_2 } \\
    &\overset{(a)}{\leq} \sqrt{\mathbb{E} \bigbra{ \norm{x_{N} - x^*}_2^2}} + \sqrt{\mathbb{E} \bigbra{ \norm{Z}_2^2}} \\
    &\leq \sqrt{ \frac{ C_{\alpha,\beta} \norm{x_0 - x^*}_2^2 \beta^2}{ \bigpar{ \alpha \min(1,2\alpha) }^2 N^2}  + \frac{ K^2 C^2 C_{\alpha,\beta}}{\alpha^2 N} } + \sqrt{ n^2 m \frac{2 C^2}{\epsilon^2 \alpha^2 T^2 N^2} \ln \bigpar{ \frac{1.25}{\delta} } } \\
    &\leq  \frac{ \sqrt{C_{\alpha,\beta}} \norm{x_0 - x^*}_2 \beta}{ \bigpar{ \alpha \min(1,2\alpha) } N}  + \frac{ K C \sqrt{C_{\alpha,\beta}}}{\alpha \sqrt{N}}  + \frac{C n \sqrt{m}}{\epsilon \alpha T N} \sqrt{ 2 \ln \bigpar{ \frac{1.25}{\delta} } } \\
    &\overset{(b)}{\leq} \frac{ \sqrt{C_{\alpha,\beta}} n^2 m \beta}{ \bigpar{ \alpha \min(1,2\alpha) } N}  + \frac{ K C \sqrt{C_{\alpha,\beta}}}{\alpha \sqrt{N}}  + \frac{C n \sqrt{m}}{\epsilon \alpha T N} \sqrt{ 2 \ln \bigpar{ \frac{1.25}{\delta} } }
\end{align*}
where $(a)$ is due to Jensen's inequality and $(b)$ is due to the fact that $\norm{x'-x''}_2 \leq n \sqrt{m}$ for any pair of unit network flows $x',x''$. 

\newpage 
\section{Computational approximations}\label{app:approximations}

In our experiments, we make several approximations for computational tractability. In this section, we provide empirical evidence that these approximations are reasonable and do not introduce significant errors. For ease of discussion, we define the following notations.
\begin{itemize}
    \item $x^{opt}_{\alpha}$: Solution obtained by solving the optimization problem (7) with regularizer $\alpha$
    \item $x_{\alpha}$: Solution obtained by solving the optimization problem with the average demand and a regularizer $\alpha$
    \item $F_{\alpha}(x, \Lambda_i)$: Evaluating the routing policy $x$ on demand $\Lambda_i$ with a regularizer $\alpha$
    \item $\langle F_{\alpha}(x, \Lambda_i) \rangle$: Evaluating the average cost of the routing policy $x$ on the set of demands $\{\Lambda_1 \hdots, \Lambda_N \}$ with a regularizer $\alpha$. More precisely, $\langle F_{\alpha}(x, \Lambda_i)\rangle = \frac{1}{N}\sum_{i=1}^{N}F_{\alpha}(x, \Lambda_i)$
    \item $ F_{\alpha}(x, \langle \Lambda_i \rangle) $: Evaluating the cost of the routing policy $x$ on the average demands $\langle \Lambda_i \rangle = \frac{1}{N} \sum_{i=1}^{N} \Lambda_i$ with a regularizer $\alpha$
\end{itemize}

\renewcommand{\arraystretch}{1.2}
\begin{table}[]
    \centering
    \begin{tabular}{|c|c|c|c|}
    \hline
    &   \textbf{Error \#1} & \textbf{Error \#2} & \textbf{Error \#3} \\ \cline{2-4}
    $N$   & $\frac{\langle F_{\alpha}(x^{opt}_{\alpha}, \Lambda_i) \rangle - F_{\alpha}(x^{opt}_{\alpha}, \langle \Lambda_i \rangle) }{\langle F_{\alpha}(x^{opt}_{\alpha}, \Lambda_i) \rangle}$   &  $\frac{\langle F_{\alpha}(x_{\alpha},  \Lambda_i ) \rangle - \langle F_{\alpha}(x^{opt}_{\alpha},  \Lambda_i ) \rangle }{\langle F_{\alpha}(x^{opt}_{\alpha},  \Lambda_i ) \rangle}$    & $\frac{\langle F_{\alpha=0}(x_{\alpha},  \Lambda_i )\rangle - \langle F_{\alpha=0}(x_{\alpha=0},  \Lambda_i ) \rangle }{\langle F_{\alpha=0}(x_{\alpha=0},  \Lambda_i )\rangle}$  \\ \hline
    10     &  $8.97\times10^{-7}$  & $<10^{-10}$  & $8.17\times10^{-3}$\\
    25     &  $1.02\times10^{-6}$  & $<10^{-10}$  & $8.16\times10^{-3}$\\
    50     &  $9.84\times10^{-7}$  & $<10^{-10}$  & $8.17\times10^{-3}$\\ \hline
    \end{tabular}
    \caption{Approximation errors.}
    \label{tab:approximation_errors}
\end{table}

Table \ref{tab:approximation_errors} presents errors from three different approximations. Our first approximation is to compute the system costs for a given policy by using the average demand instead of averaging the costs over every observed demand. The first column (denoted as Error \#) lists the fractional error introduced by this approximation for different values of $N$ when using the optimal routing policy. We note that the error is less than $10^{-6}$, and we we observe a 30-50X improvement in computational time when evaluating the costs using he average flow. This justifies the use of the average demand for estimating costs. 
Our second approximation is in solving an easier optimization problem to compute the optimal routing policy. In this case, the exact approach would be to solve the optimization problem described in Equation \ref{eqn:opt:random:datadriven}. However, the size of this problem grows rapidly with the number of data points $N$. Our approximation involves solving the easier optimization problem of maximizing $ F_{\alpha}(x, \langle \Lambda_i \rangle) $ to obtain the routing policy $x_{\alpha}$ instead of solving the original optimization problem to obtain $x_{\alpha}^{opt}$. The second column of the table (titled Error \#2) presents the error introduced due to this approximation on the travel costs. The small errors indicate that this assumption is reasonable, and helps us obtain upto a 30X speedup in solving the optimization problem.
Finally, we show through numerical evaluations that the addition of the $\alpha=10^3$ regularizer does not change the travel costs significantly (third column, denoted as Error \#3), and results in less than a 0.01\% error in the total travel cost.

\end{document}